\documentclass[preprintnumbers,amsmath,amssymb,superscriptaddress]{revtex4}
\usepackage{amsmath,amssymb,braket,bm}
\usepackage{times}
\usepackage{graphicx}
\usepackage{dcolumn}
\usepackage{dsfont}
\usepackage{mathrsfs}
\usepackage[landscape,papersize={297.1mm,210mm},left=1.9cm,right=1.6cm,top=2.7cm,bottom=2.8cm]{geometry}
\usepackage{amsthm}
\usepackage{xcolor}
\usepackage{hyperref}
\usepackage{float} 
\makeatletter

\newcommand{\Rmnum}[1]{\expandafter\@slowromancap\romannumeral #1@}

\newtheorem{theorem}{Theorem}
\makeatother

\begin{document}
	
\title{Discrimination of genuinely nonlocal sets without entanglement in multipartite systems}

\author{Ziying Hou}
\affiliation{School of Mathematics and Science, Hebei GEO University, Shijiazhuang 052161, China}

\author{Huaqi Zhou}
\email{zhouhuaqilc@163.com}
\affiliation{School of Mathematics and Science, Hebei GEO University, Shijiazhuang 052161, China}
\affiliation{Hebei Province Key Laboratory of Intelligent Sensing and Data Processing for Geo-environment, Hebei GEO University, Shijiazhuang 052161, China}
\author{Limin Gao}
\email{gaoliminabc@163.com}
\affiliation{School of Mathematics and Science, Hebei GEO University, Shijiazhuang 052161, China}
\affiliation{Hebei Province Key Laboratory of Intelligent Sensing and Data Processing for Geo-environment, Hebei GEO University, Shijiazhuang 052161, China}

\begin{abstract}
Genuine nonlocality arises when a set of multipartite orthogonal states is locally indistinguishable under any bipartition of the subsystems. The entanglement-assisted discrimination of such genuinely nonlocal orthogonal product sets has attracted significant attention in quantum information. Based on the criterion of local irreducibility, genuine nonlocality is classified into Type I (reducible) and Type II (irreducible). We present entanglement-assisted discrimination schemes for both types of genuinely nonlocal sets that use minimal resources. For low-dimensional cases, Type I sets require only a single EPR pair, whereas Type II sets necessitate only one GHZ state. We extend these protocols to higher-dimensional systems: the discrimination of Type I sets requires only one maximally entangled state in a two-qutrit system, while that of Type II sets similarly demands a single maximally entangled state in a three-qutrit system. For $n$-partite ($n > 3$) systems, Type I sets continue to require only one maximally entangled state, whereas Type II sets necessitate just one additional EPR pair compared to their Type I counterparts. These results provide a robust framework for the efficient discrimination of genuinely nonlocal sets using minimal quantum resources.
\end{abstract}
\maketitle
	
\section{INTRODUCTION}

Quantum nonlocality is a central topic in quantum information theory. A set of orthogonal quantum states is locally indistinguishable if it cannot be perfectly distinguished by any sequence of local operations and classical communication (LOCC). This phenomenon is distinct from Bell nonlocality~\cite{brunner2014bell}, which originates from entangled pure states~\cite{gao2014permutationally, Horodecki}. In 1999, Bennett \textit{et al.}~\cite{bennett1999quantum} first constructed a set of orthogonal product states in $\mathbb{C}^3 \otimes \mathbb{C}^3$ that cannot be distinguished via LOCC, demonstrating that nonlocality can emerge in the absence of entanglement. Subsequently, significant progress has been made in nonlocality theory and the construction of such state sets~\cite{ZGY22, ZGY23, HGY24, HGY25, ZGY25, cao2023locally, walgate2002nonlocality, wang2017local, xu2015detecting, xu2016locally, xu2021novel, zhang2016local, zhang2017construction, zuo2021nonlocal}. Additionally, genuine nonlocality in multipartite systems has been extensively studied~\cite{rout2019genuinely}.

A set of multipartite orthogonal quantum states is genuinely nonlocal if it is locally indistinguishable under every bipartition. If it is impossible to eliminate one or more quantum states from the set using nontrivial orthogonality-preserving local measurements (OPLMs), the set is defined as locally irreducible. In 2019, Rout \textit{et al.}~\cite{rout2019genuinely} introduced the formal definition and classification of genuinely nonlocal product bases. They categorized the genuine nonlocality of orthogonal state sets into two types: a genuinely nonlocal set has \textit{type I genuine nonlocality} if it is locally reducible under the full separation of all subsystems; otherwise, it possesses \textit{type II genuine nonlocality}. In 2023, Xiong \textit{et al.}~\cite{xiong2023distinguishability} investigated distinguishability-based genuine nonlocality, using GHZ states as an example, and constructed genuinely nonlocal sets in tripartite systems with cardinality as low as $d+3$, where $d = 2^t$ for $t \geq 1$. In 2024, Lu \textit{et al.}~\cite{Lu2024GenuinelyNS} constructed a locally indistinguishable orthogonal product set (OPS) of size $2d_2-1$ in $\mathbb{C}^{d_1} \otimes \mathbb{C}^{d_2}$ ($3 \le d_1 \le d_2$), which improves upon existing results. Based on this bipartite construction, genuinely nonlocal OPSs of type I were proposed for tripartite systems $\mathbb{C}^{d_1} \otimes \mathbb{C}^{d_2} \otimes \mathbb{C}^{d_3}$ ($3 \le (d_1-1) \le d_2 \le d_3$) and subsequently extended to $n$-partite systems. Furthermore, type II genuinely nonlocal sets without entanglement were constructed in $\mathbb{C}^{d_1} \otimes \mathbb{C}^{d_2} \otimes \mathbb{C}^{d_3}$ ($3 \le d_1 \le d_2 \le d_3$) and $\bigotimes_{i=1}^n \mathbb{C}^{d_i}$ ($3 \le d_1 \le d_2 \le \cdots \le d_n$), respectively. Extensive research advances have since been achieved regarding the construction of genuinely nonlocal multipartite orthogonal states~\cite{rout2021multiparty, li2021local, xiong2024small}.

For a given genuinely nonlocal set, it is impossible to precisely identify every quantum state solely via LOCC. When information is encoded in such sets, global measurements are typically required to extract it. Under the restriction of LOCC, even orthogonal product sets may become indistinguishable~\cite{feng2009characterizing, yang2015characterizing, halder2018several, xu2016quantum}. This property confers significant practical value in fields such as quantum secret sharing and quantum data hiding~\cite{wang2017quantum, yang2015quantum, guo2001quantum, divincenzo2002quantum, jiang2019arbitrary, jiang2018multiparty, gao2005deterministic}. However, when sufficient entanglement is shared, such nonlocal sets can be perfectly distinguished~\cite{cohen2007local, bandyopadhyay2018optimal, li2019local}. Entanglement is an extremely valuable resource~\cite{gao2010detection, yan2011two}. Rout \textit{et al.}~\cite{rout2019genuinely} proposed local discrimination protocols assisted by various entanglement resources, demonstrating higher entanglement efficiency compared with conventional teleportation-based schemes. Developing LOCC discrimination schemes that utilize low-dimensional entanglement and minimize resource consumption remains a subject of significant research interest~\cite{ghosh2001distinguishability, zhang2016entanglement, gungor2016entanglement, bandyopadhyay2016entanglement, mikova2014optimal, shi2020unextendible}.

In this work, We study the entanglement-assisted local distinguishability of genuinely nonlocal orthogonal product sets. In Sec. \ref{Q2}, we introduce the necessary preliminaries. In Sec. \ref{Q3}, we present entanglement-assisted discrimination schemes for known locally indistinguishable sets in the $\mathbb{C}^{d_1} \otimes \mathbb{C}^{d_2}$ system. In Sec. \ref{Q4}, we propose entanglement-assisted discrimination schemes for two genuinely nonlocal type I OPSs~\cite{Lu2024GenuinelyNS} and generalize these schemes to $\bigotimes_{i=1}^n \mathbb{C}^{d_i}$ systems. In Sec. \ref{Q5}, we present entanglement-assisted discrimination schemes for two genuinely nonlocal type II OPSs and extend them to $\bigotimes_{i=1}^n \mathbb{C}^{d_i}$ systems. Finally, in Sec.~\ref{Q6}, we provide a brief summary.
	
\section{PRELIMINARIES}\label{Q2}
In this section, we introduce the notation and basic concepts used throughout this paper.

\textbf{Definition 1.} \cite{bennett1999quantum}
A set of orthogonal states is locally indistinguishable or nonlocal if it cannot be perfectly distinguished under local operations and classical communications (LOCC).

\textbf{Definition 2.} \cite{walgate2002nonlocality}
A measurement is trivial if all the POVM elements are proportional to the identity operator; otherwise, the measurement is nontrivial.

\textbf{Definition 3.} \cite{halder2019strong}
A measurement of a set of orthogonal quantum states is called an orthogonality-preserving local measurement (OPLM) if the post-measurement states remain mutually orthogonal.

\textbf{Definition 4.} \cite{halder2019strong}
A set of orthogonal quantum states is locally irreducible if it is not possible to eliminate one or more quantum states from the set by nontrivial OPLMs.

Therefore a locally irreducible set is necessarily locally indistinguishable, although the converse does not hold.

\textbf{Definition 5.}
A set of multipartite orthogonal states is genuinely nonlocal if it is locally indistinguishable in every bipartition of the subsystems \cite{rout2019genuinely}. Furthermore, if a set is locally reducible when all parties are separated, it possesses genuine nonlocality of type I; otherwise, it exhibits type II genuine nonlocality \cite{rout2019genuinely}.

For simplicity, we omit normalization. To facilitate the discussion of local discrimination protocols, we introduce the following notation for resource configurations:
\[
\left\{\left(p,\left|\phi^+(d)\right\rangle_{AB}\right);\left(q,\left|G(d)\right\rangle_{ABC}\right)\right\},
\]
where $p$ and $q$ are non-negative real numbers denoting the average number of maximally entangled states shared between the corresponding parties, respectively \cite{rout2019genuinely}. Here,
\[
|\phi^+(d)\rangle_{AB} = \frac{1}{\sqrt{d}} \sum_{i=0}^{d-1} |ii\rangle_{AB},
\]
denotes $p$ copies of maximally entangled states consumed between $A$ and $B$. Similarly,
\[
|G(d)\rangle_{ABC} = \frac{1}{\sqrt{d}} \sum_{i=0}^{d-1} |iii\rangle_{ABC},
\]
represents $q$ copies of resources consumed among $A, B,$ and $C$. In the following, $A, B,$ and $C$ denote the three subsystems of a tripartite system corresponding to Alice, Bob, and Charlie, respectively. Correspondingly, $A_1, A_2, \ldots, A_n$ denote the $n$ subsystems of an $n$-partite system.

In entanglement-assisted discrimination protocols, we denote
\begin{equation*}
	M_i=P\bigl[(|j_{i1}\rangle,|j_{i2}\rangle)_X;|k_i\rangle_x\bigr]+P\bigl[|j_i'\rangle_X;|k_i'\rangle_x\bigr],
\end{equation*}
as a local projective measurement operator on different subsystems, where $\sum_i M_i=I$. Here, $\{|j_{i1}\rangle,|j_{i2}\rangle\}$ and $\{|j_i'\rangle\}$ are subsets of the basis of subsystem $X$, while $\{|k_i\rangle\}$ and $\{|k_i'\rangle\}$ are subsets of the basis of the auxiliary system $x$. The term
\[
P\bigl[(|j_{i1}\rangle,|j_{i2}\rangle)_X;|k_i\rangle_x\bigr]
\]
stands for
\[
\left(|j_{i1}\rangle_X\langle j_{i1}|+|j_{i2}\rangle_X\langle j_{i2}|\right)\otimes |k_i\rangle_x\langle k_i|,
\]
and $P\bigl[|j_i'\rangle_X;|k_i'\rangle_x\bigr]$ is defined analogously.

\section{ DISCRIMINATION OF THE NONLOCAL SETS IN $\mathbb{C}^{d_1} \otimes \mathbb{C}^{d_2}\left ( 3\le d_1\le d_2 \right )$}\label{Q3}
	
	In $\mathbb{C}^{3} \otimes \mathbb{C}^{5}$, the nonlocal set consisting of the following nine orthogonal product states in Ref.~\cite{Lu2024GenuinelyNS} is
	\begin{equation}\label{N1}
		\begin{aligned}
			&\left |\phi_1  \right \rangle =|1\rangle_{A}|0-1\rangle_{B},\\
			&\left |\phi_2  \right \rangle =|2\rangle_{A}|0-2\rangle_{B},\\
			&\left |\phi_3  \right \rangle =|0-1\rangle_{A}|2\rangle_{B},\\
			&\left |\phi_4  \right \rangle =|0-2\rangle_{A}|1\rangle_{B},\\
			&\left |\phi_5  \right \rangle =|0-1\rangle_{A}|3\rangle_{B},\\
			&\left |\phi_6  \right \rangle =|0-1\rangle_{A}|4\rangle_{B},\\
			&\left |\phi_7  \right \rangle =|0+1\rangle_{A}(\left |2 \right \rangle +\omega_{3} \left | 3 \right \rangle +\omega_{3}^{2}  \left | 4  \right \rangle )_{B},\\
			&\left |\phi_8  \right \rangle =|0+1\rangle_{A}(\left |2 \right \rangle +\omega_{3}^{2} \left | 3 \right \rangle +\omega_{3}  \left | 4  \right \rangle )_{B},\\
			&\left |S  \right \rangle =|0+1+2\rangle_{A}|0+1+2+3+4\rangle_{B}.\\
		\end{aligned}
	\end{equation}
	\begin{theorem}\label{thm:n1}
		The nonlocal basis given by Eq.~(\ref{N1}) can be perfectly distinguished by using the entanglement resource configuration $(1, |\phi^{+}(2)\rangle_{AB})$.
	\end{theorem}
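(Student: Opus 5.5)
The plan is to use the shared EPR pair $|\phi^+(2)\rangle_{ab}=|00\rangle+|11\rangle$ so that Bob learns, coherently and without disturbing orthogonality, whether Alice's component of each state lies in $\mathrm{span}\{|0\rangle\}$ or in $\mathrm{span}\{|1\rangle,|2\rangle\}$. This "tag" breaks the local irreducibility of the set in Eq.~(\ref{N1}); after it is in place, a short sequence of ordinary LOCC measurements should finish the discrimination.

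\textbf{Step 1 (Alice's measurement).} Alice measures on $A\otimes a$ with
\begin{equation*}
M_1=P\bigl[|0\rangle_A;|0\rangle_a\bigr]+P\bigl[(|1\rangle,|2\rangle)_A;|1\rangle_a\bigr],\qquad
M_2=P\bigl[|0\rangle_A;|1\rangle_a\bigr]+P\bigl[(|1\rangle,|2\rangle)_A;|0\rangle_a\bigr].
\end{equation*}
These satisfy $M_1+M_2=I$. For outcome $M_1$, each state $|\phi\rangle(|00\rangle+|11\rangle)_{ab}$ becomes one in which Alice's $|0\rangle$ is attached to $|00\rangle_{ab}$ and her $|1\rangle,|2\rangle$ are attached to $|11\rangle_{ab}$. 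For example,
\[
|\phi_3\rangle\mapsto |0\rangle|2\rangle|00\rangle-|1\rangle|2\rangle|11\rangle,\qquad
|\phi_1\rangle\mapsto |1\rangle|0-1\rangle|11\rangle .
\]
Outcome $M_2$ gives the same picture with $b$ flipped, which Bob undoes by applying $\sigma_x$ to $b$. It therefore suffices to treat $M_1$. Orthogonality is preserved because this is a projection composed with an isometry on Alice's side.

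\textbf{Step 2 (Bob's measurements on $B\otimes b$).} Bob now effectively has access to a $3\oplus$-split copy of Alice's information. He performs projective measurements of the form $P[(\cdot)_B;|k\rangle_b]$ that pair specific Bob basis vectors with specific $b$ values, chosen so that they isolate individual states. The targets are:
\begin{itemize}
\item[(i)] separate $|\phi_1\rangle$ (support $|0\rangle,|1\rangle_B$ with $b=1$ only) from $|\phi_4\rangle$ (support $|1\rangle_B$ with both $b=0,1$);
\item[(ii)] separate $|\phi_2\rangle$ ($|0\rangle,|2\rangle_B$ with $b=1$) from $|\phi_3\rangle$ ($|2\rangle_B$ with both $b$ values).
\end{itemize}
Concretely, I would first let Bob project onto
\[
P\bigl[(|0\rangle,|1\rangle)_B;|1\rangle_b\bigr],
\]
which leaves $|\phi_1\rangle$ and part of $|\phi_4\rangle$ and $|S\rangle$. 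The remaining pieces are then told apart by Alice measuring $\{|1\rangle,|2\rangle\}$ versus $|0\rangle$, or by Bob using the $|0\pm1\rangle$ basis. The analogous block $(|0\rangle,|2\rangle)_B$ handles $|\phi_2\rangle$ and $|\phi_3\rangle$, and the block $(|2\rangle,|3\rangle,|4\rangle)_B$ handles $|\phi_5\rangle,\dots,|\phi_8\rangle$. For these last four, once Bob's $b$ register confirms that Alice's part lies in $\mathrm{span}\{|0\rangle,|1\rangle\}$, Bob can use the Fourier basis $\{|2\rangle+\omega_3^k|3\rangle+\omega_3^{2k}|4\rangle\}$ and Alice can measure $|0\pm1\rangle$.

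\textbf{Step 3 (finishing and the main obstacle).} Finally, I would check that every outcome branch leaves a set of mutually orthogonal product states that is locally distinguishable by a trivial one-way protocol. The stopper state $|S\rangle$ is the delicate one: its component survives in every branch, and it must remain orthogonal to the survivors there. The main obstacle is choosing Bob's measurement blocks (which $B$-vectors go with which $b$-value) so that every projector is orthogonality-preserving, given that $|S\rangle$ and $|\phi_7\rangle,|\phi_8\rangle$ overlap many blocks. My first attempt would be to group Bob's basis as $\{|0\rangle\}$, $\{|1\rangle\}$, $\{|2\rangle\}$, $\{|3\rangle,|4\rangle\}$, each tensored with $|0\rangle_b$ or $|1\rangle_b$. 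Within each branch I would then verify orthogonality directly, using the uniform sign pattern of $|S\rangle$ to distinguish it from the $(0-1)$-type states.
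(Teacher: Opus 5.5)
\textbf{There is a genuine gap, and it starts with your choice of tag in Step 1.}

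The split $\{|0\rangle\}\,|\,\{|1\rangle,|2\rangle\}$ cuts through Alice's superpositions $|0\pm1\rangle$, and those superpositions are what make $|\phi_3\rangle,|\phi_5\rangle,|\phi_6\rangle,|\phi_7\rangle,|\phi_8\rangle$ mutually orthogonal. After your $M_1$ these five states become $|0\rangle_A|w\rangle_B|00\rangle_{ab}\mp|1\rangle_A|w\rangle_B|11\rangle_{ab}$, with $w\in W:=\mathrm{span}\{|2\rangle,|3\rangle,|4\rangle\}$. Their orthogonality now rests on the coherence between the two $b$-sectors.

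Let Bob's POVM element have diagonal blocks $F$ (on $b=0$) and $G$ (on $b=1$); only these blocks matter. The orthogonality conditions among the five states force $P_W(F+G)P_W=h\,P_W$. Hence every orthogonality-preserving measurement of Bob gives all five the same (normalized) probability. Adding the conditions involving $|\phi_1\rangle,|\phi_2\rangle,|S\rangle$ gives the relation $3q_2=q_3+2q_1$ for the normalized probabilities of any outcome. With positivity, one checks that no such measurement splits the set into two nonempty groups with probabilities $0$ and $1$.

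Alice is no better off. The pairs $(\phi_3,\phi_7)$, $(\phi_1,\phi_4)$, $(\phi_4,S)$ force every outcome of an orthogonality-preserving measurement of hers to have the same probability for all nine states. So your tag does not break the irreducibility in a usable way. The obstacle you flag in Step 3 cannot be overcome along the lines you sketch.

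\textbf{The concrete measurements you propose all fail orthogonality preservation.}
Under $P[(|0\rangle,|1\rangle)_B;|1\rangle_b]$, two surviving states overlap $|S\rangle\mapsto(|1\rangle+|2\rangle)(|0\rangle+|1\rangle)|11\rangle$ nontrivially:
\begin{itemize}
\item $|\phi_2\rangle\mapsto|2\rangle|0\rangle|11\rangle$, which you omitted;
\item $|\phi_4\rangle\mapsto-|2\rangle|1\rangle|11\rangle$.
\end{itemize}
Under $P[(|0\rangle,|2\rangle)_B;|1\rangle_b]$, both $|\phi_7\rangle$ and $|\phi_8\rangle$ collapse to $|1\rangle|2\rangle|11\rangle$. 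In your fallback grouping:
\begin{itemize}
\item $P[|1\rangle_B;|0\rangle_b]$ sends $|\phi_4\rangle$ and $|S\rangle$ to the same vector $|0\rangle|1\rangle|00\rangle$;
\item $P[(|3\rangle,|4\rangle)_B;|0\rangle_b]$ puts five states into the two-dimensional space $|0\rangle_A\otimes\mathrm{span}\{|3\rangle,|4\rangle\}$.
\end{itemize}
Your treatment of $|\phi_5\rangle,\dots,|\phi_8\rangle$ also assumes the $b$ register certifies that Alice's part lies in $\mathrm{span}\{|0\rangle,|1\rangle\}$. Your $M_1$ never provides that information.

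\textbf{The missing idea is the right tag: $\{|0\rangle,|1\rangle\}\,|\,\{|2\rangle\}$.} That is, use $M_{1,1}=P[(|0\rangle,|1\rangle)_A;|0\rangle_a]+P[|2\rangle_A;|1\rangle_a]$. This does two things:
\begin{itemize}
\item It leaves $|\phi_3\rangle,|\phi_5\rangle,\dots,|\phi_8\rangle$ untouched in the $|00\rangle_{ab}$ sector.
\item It moves $|\phi_2\rangle$ entirely into $|11\rangle_{ab}$. Its overlap with $|S\rangle$ through Alice's $|2\rangle$ is the only thing that stops Bob from projecting onto $W$ in the original set.
\end{itemize}
Then $P[(|2\rangle,|3\rangle,|4\rangle)_B;|0\rangle_b]$ is orthogonality-preserving and splits the set into two branches:
\begin{itemize}
\item $\{\phi_3,\phi_5,\phi_6,\phi_7,\phi_8,S\}$, finished by Alice measuring in $|0\pm1\rangle$ and then Bob;
\item $\{\phi_1,\phi_2,\phi_4,S\}$, finished by Alice's $P[|1\rangle_A;|0\rangle_a]$ followed by Bob's $P[|1\rangle_B;I_b]$.
\end{itemize}
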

	
	\textbf{Proof.}	For the local discrimination of these states, Alice and Bob additionally share a maximally entangled state \(\vert \phi^{+}(2)\rangle_{ab}\). The initial state is
	\[
	\begin{aligned}
		\left|\phi\right\rangle_{AB} \otimes \left|\phi^{+}(2)\right\rangle_{ab},
	\end{aligned}
	\]
	where $a$ and $b$ are the ancillary systems of Alice and Bob, respectively.
	
	\textit{Step 1.} Alice performs the measurement
	\[
	\begin{aligned}
		\mathcal{M}_1 \equiv \bigl\{
		M_{1,1}=P\bigl[(|0\rangle,|1\rangle)_A;|0\rangle_a\bigr]
		+P\bigl[|2\rangle_A;|1\rangle_a\bigr],\;
		M_{1,2}=I-M_{1,1}\bigr\}.
	\end{aligned}
	\]
	Suppose that the outcome corresponding to \(M_{1,1}\) occurs. Then
	\[
	\begin{aligned}
		&\left |\phi_1  \right \rangle \to |1\rangle_{A}|0-1\rangle_{B}|00\rangle_{ab},\\
		&\left |\phi_2  \right \rangle \to |2\rangle_{A}|0-2\rangle_{B}|11\rangle_{ab},\\
		&\left |\phi_3  \right \rangle \to |0-1\rangle_{A}|2\rangle_{B}|00\rangle_{ab},\\
		&\left |\phi_4  \right \rangle \to |0\rangle_{A}|1\rangle_{B}|00\rangle_{ab}-|2\rangle_{A}|1\rangle_{B}|11\rangle_{ab},\\
		&\left |\phi_5  \right \rangle \to |0-1\rangle_{A}|3\rangle_{B}|00\rangle_{ab},\\
		&\left |\phi_6  \right \rangle \to |0-1\rangle_{A}|4\rangle_{B}|00\rangle_{ab},\\
		&\left |\phi_7  \right \rangle \to |0+1\rangle_{A}(\left |2 \right \rangle +\omega_{3} \left | 3 \right \rangle +\omega_{3}^{2}  \left | 4  \right \rangle )_{B}|00\rangle_{ab},\\
		&\left |\phi_8  \right \rangle \to |0+1\rangle_{A}(\left |2 \right \rangle +\omega_{3}^{2} \left | 3 \right \rangle +\omega_{3}  \left | 4  \right \rangle )_{B}|00\rangle_{ab},\\
		&\left |S  \right \rangle \to |0+1\rangle_{A}|0+1+2+3+4\rangle_{B}|00\rangle_{ab}+|2\rangle_{A}|0+1+2+3+4\rangle_{B}|11\rangle_{ab}.\\
	\end{aligned}
	\]
	
	\textit{Step 2.} Bob performs the measurement
	\[
	\begin{aligned}
		\mathcal{M}_2\equiv \bigl\{
		M_{2,1}=P[(|2\rangle,|3\rangle,|4\rangle)_{B};|0\rangle_{b}],
		\;
		M_{2,2}=I-M_{2,1}
		\bigr\}.
	\end{aligned}
	\]
	If outcome \(M_{2,1}\) occurs, then the states are \(|\phi_{3} \rangle,~|\phi_{5} \rangle,~|\phi_{6} \rangle,~|\phi_{7} \rangle,~|\phi_{8} \rangle,\) and $\left |S  \right \rangle \to|0+1\rangle_{A}|2+3+4\rangle_{B}|00\rangle_{ab}$. Otherwise, the state is one of the remaining four states \(|\phi_{1} \rangle,~|\phi_{2} \rangle,~|\phi_{4} \rangle,\) and $\left |S  \right \rangle \to|0+1\rangle_{A}|0+1\rangle_{B}|00\rangle_{ab}+|2\rangle_{A}|0+1+2+3+4\rangle_{B}|11\rangle_{ab}$.
	
	\textit{Step 3.} For the branch corresponding to  \(M_{2,1}\), Alice performs the measurement
	\[
	\begin{aligned}
		\mathcal{M}_3\equiv \bigl\{
		M_{3,1}=P[|0-1\rangle_A;|0\rangle_a],
		\;
		M_{3,2}=I-M_{3,1}
		\bigr\}.
	\end{aligned}
	\]
	If outcome \(M_{3,1}\) occurs, then the state belongs to set \(\{|\phi_{3} \rangle,|\phi_{5} \rangle,|\phi_{6} \rangle\}\), which can be perfectly distinguished by measuring subsystem B; if outcome \(M_{3,2}\) occurs, then the state belongs to set \(\{|\phi_{7} \rangle,|\phi_{8} \rangle, \left|S  \right \rangle \to|0+1\rangle_{A}|2+3+4\rangle_{B}|00\rangle_{ab}\}\) which is locally distinguishable.
	
	For the branch corresponding to \(M_{2,2}\), Alice then performs the measurement
	\[
	\begin{aligned}
		\mathcal{M}_4\equiv \bigl\{
		M_{4,1}=P[|1\rangle_A;|0\rangle_a],
		\;
		M_{4,2}=I-M_{4,1}
		\bigr\}.
	\end{aligned}
	\]
	If outcome \(M_{4,1}\) occurs, the states are \(|\phi_{1} \rangle\) and $\left |S \right  \rangle \to|1\rangle_{A}|0+1\rangle_{B}|00\rangle_{ab}$, which are LOCC distinguishable. For outcome \(M_{4,2}\), the remaining possibilities are \(|\phi_{2} \rangle,~|\phi_{4} \rangle,\) and $\left |S \right  \rangle \to|0\rangle_{A}|0+1\rangle_{B}|00\rangle_{ab}+|2\rangle_{A}|0+1+2+3+4\rangle_{B}|11\rangle_{ab}$. Next, Bob performs
	\[
	\mathcal{M}_4' \equiv \bigl\{
	M_{4,1}'=P[|1\rangle_{B};I_{b}],
	\;
	M_{4,2}'=I-M_{4,1}'
	\bigr\}.
	\]
	The corresponding results for \(M_{4,1}'\) and \(M_{4,2}'\) are \(\{|\phi_{4} \rangle, \left |S \right   \rangle\to |0\rangle_{A}|1\rangle_{B}|00\rangle_{ab}+|2\rangle_{A}|1\rangle_{B}|11\rangle_{ab}\}\) and \(\{|\phi_{2} \rangle, \left |S \right \rangle\to |0\rangle_{A}|0\rangle_{B}|00\rangle_{ab}+|2\rangle_{A}|0+2+3+4\rangle_{B}|11\rangle_{ab}\}\), respectively. These sets are all locally distinguishable.
	
	If other outcome occurs in Step 1, one can construct a similar protocol to locally distinguish the states in set (\ref{N1}).
	\qed
	
	In this protocol, we consume a total of 1 ebit of entanglement. We also find that the set of the obtained states and the corresponding $|S\rangle$ can always be locally discriminated after performing the orthogonality-preserving projection measurement. Next, we will generalize these results to $d_1 \otimes d_2 $ systems and no longer discuss the state $|S\rangle$.
	
	In the system $\mathbb{C}^{d_1} \otimes \mathbb{C}^{d_2}\left (3 \le d_2 \le d_1\right )$, the nonlocal set of orthogonal product states in Ref.~\cite{Lu2024GenuinelyNS} is
	\begin{equation}\label{N2}
		\begin{aligned}
			&|\phi_i\rangle = |i\rangle_A |0-i\rangle_B, \quad 1 \leq i \leq d_1-1, \\
			&|\phi_{i+(d_1-1)}\rangle = |0-i\rangle_A |m\rangle_B, \quad 1 \leq i \leq d_1-2, \ m = i+1; \ i = d_1-1, \ m = 1, \\
			&|\phi_{j+(d_1-1)}\rangle = |0-1\rangle_A |j\rangle_B, \quad d_1 \leq j \leq d_2-1, \\
			&|\phi_{s+(d_1+d_2-2)}\rangle = |0+1\rangle_A \Bigl(|2\rangle_B + \sum_{t=1}^{d_2-d_1} \omega_{d_2-d_1+1}^{st} |t+d_1-1\rangle_B \Bigr),  \quad 1 \leq s \leq d_2-d_1, \\
			&|S\rangle = |0+1+\cdots+(d_1-1)\rangle_A |0+1+\cdots+(d_2-1)\rangle_B.
		\end{aligned}
	\end{equation}
	
	\begin{theorem}\label{thm:n2}
		The entanglement resource configuration $(1,\vert \phi^{+}(3)\rangle_{AB})$ is sufficient for local discrimination of the nonlocal set~(\ref{N2}).
	\end{theorem}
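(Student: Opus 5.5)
The plan is to adapt the protocol of Theorem~\ref{thm:n1}. Alice and Bob share $|\phi^{+}(3)\rangle_{ab}=|00\rangle+|11\rangle+|22\rangle$. Alice first makes a joint measurement on $A$ and her ancilla $a$ that attaches a ``colour'' $c(i)\in\{0,1,2\}$ to each of her basis vectors $|i\rangle_A$. We always put $c(0)=c(1)=0$, so the states whose Alice part lies in $\mathrm{span}\{|0\rangle,|1\rangle\}$ (namely $|\phi_1\rangle$, the states $|0-1\rangle_A|j\rangle_B$, the states $|0+1\rangle_A(\cdots)_B$, and the corresponding part of $|S\rangle$) stay attached to $|00\rangle_{ab}$. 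For $2\le i\le d_1-1$ a superposition such as $|0-i\rangle_A$ is mapped to $|0\rangle_A|00\rangle_{ab}-|i\rangle_A|c(i)c(i)\rangle_{ab}$. Bob's ancilla therefore carries a copy of Alice's colour, and Bob can use projectors of the form $P\bigl[|m\rangle_B;|k\rangle_b\bigr]$ to split Alice's superpositions. The entanglement cost is one copy of $|\phi^{+}(3)\rangle_{AB}$. The other outcomes of this first measurement are handled by the same protocol with the colours permuted.

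\textit{Choice of colouring (the key point).} The troublesome states are $|\phi_i\rangle=|i\rangle_A|0-i\rangle_B$ and $|\phi_{i+(d_1-1)}\rangle=|0-i\rangle_A|m\rangle_B$ with $m=i+1$, or $m=1$ when $i=d_1-1$. Together they form a cyclic chain linking Alice's index $i$ to Bob's index $i+1$, closing up through index $1$. Bob wants to cut $|0-i\rangle_A|i+1\rangle_B$ by projecting $B$ onto $|i+1\rangle$ with ancilla value $c(i)$. This projection must not disturb $|\phi_{i+1}\rangle=|i+1\rangle_A|0-(i+1)\rangle_B$, whose $B$-part contains $|i+1\rangle$ and which is attached to colour $c(i+1)$. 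So we need $c(i)\neq c(i+1)$ along the chain, together with the closing constraint $c(d_1-1)\neq c(1)=0$. With only two colours this cycle can fail to be properly coloured when its length is odd, which is why a qubit EPR pair need not suffice here, whereas in the $3\otimes5$ case of Theorem~\ref{thm:n1} the chain was short enough. The choice is: alternate the colours $1,2$ for $i=2,\dots,d_1-1$, and if a conflict remains at the end, replace the last colour by whichever of $\{1,2\}$ differs from its neighbour (both neighbours are now distinct from $0$, or we use colour $0$ only at $i=1$). This is exactly a proper $3$-colouring of the cycle.

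\textit{Subsequent steps.} Following Theorem~\ref{thm:n1}, Bob first measures $P\bigl[(|2\rangle,\dots,|d_2-1\rangle)_B;|0\rangle_b\bigr]$ against its complement. This separates the block with Alice in $\mathrm{span}\{|0\rangle,|1\rangle\}$ and Bob in $\{|2\rangle,\dots\}$, which consists of $|\phi_{d_1}\rangle$, the states $|0-1\rangle_A|j\rangle_B$, the Fourier states $|0+1\rangle_A(\cdots)_B$, and the projected $|S\rangle$. Alice then distinguishes $|0-1\rangle$ from $|0+1\rangle$ on colour $0$, and Bob finishes by measuring $B$ in the computational or Fourier basis, as in Step 3 of Theorem~\ref{thm:n1}. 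In the complementary branch, Alice and Bob alternately apply projections $P\bigl[|i\rangle_A;|c(i)\rangle_a\bigr]$ and $P\bigl[|i+1\rangle_B;|c(i)\rangle_b\bigr]$ around the chain. Each projection isolates one $|\phi_i\rangle$ or one $|\phi_{i+(d_1-1)}\rangle$ (plus a fragment of $|S\rangle$, which is locally separable from it). By the colouring condition, every remaining state is either entirely inside or entirely outside each projector, so orthogonality is preserved at every step. This is the same induction as Step 3 of Theorem~\ref{thm:n1}, run around the cycle.

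The main obstacle is the bookkeeping that each projector is orthogonality-preserving for all $2d_1+\ldots$ states simultaneously. This is especially delicate near the wrap-around state $|0-(d_1-1)\rangle_A|1\rangle_B$, which interacts with $|\phi_1\rangle$ on colour $0$. I would check this case first, and adjust $c(d_1-1)$ if needed.
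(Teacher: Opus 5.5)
\textbf{Verdict: genuine gap.} Your outline is the paper's (Appendix~A): one $|\phi^{+}(3)\rangle$, a colour-attaching measurement by Alice with $c(0)=c(1)=0$, a Bob projection that removes the $|0\pm1\rangle_A$ block, then peeling of the cyclic chain. But the mechanism you give for the peeling is wrong. Two of your projectors also fail to be orthogonality-preserving once $|S\rangle$ is kept in the set.

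\textbf{Why splitting superpositions fails.} After Alice's measurement, $|S\rangle\to\sum_k|k\rangle_A|0+1+\cdots+(d_2-1)\rangle_B|c(k)c(k)\rangle_{ab}$. Its orthogonality to $|\phi_{i+(d_1-1)}\rangle\to|0\rangle_A|i+1\rangle_B|00\rangle_{ab}-|i\rangle_A|i+1\rangle_B|c(i)c(i)\rangle_{ab}$ comes only from the cancellation $1-1=0$ between these two branches. Hence no projector may separate them: ``splitting Alice's superpositions'' is precisely what must be avoided.

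\textbf{Your block projector.} $P[(|2\rangle,\dots,|d_2-1\rangle)_B;|0\rangle_b]$ does separate these branches for $d_1\ge 4$. It contains $|3\rangle,\dots,|d_1-1\rangle$, so it captures the $|0\rangle_A$ branch of $|\phi_{i+(d_1-1)}\rangle$ for $2\le i\le d_1-2$. That branch has inner product $1$ with the captured piece $|0+1\rangle_A|2+3+\cdots+(d_2-1)\rangle_B|00\rangle_{ab}$ of $|S\rangle$. The rejected branch has inner product $-1$ with the rejected piece of $|S\rangle$. The paper instead uses $P[(|2\rangle,|d_1\rangle,\dots,|d_2-1\rangle)_B;|0\rangle_b]$.

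\textbf{Your chain projectors.} $P[|i+1\rangle_B;|c(i)\rangle_b]$ always separates the branches, because the $|0\rangle_A$ branch carries ancilla value $0\neq c(i)$. $P[|i\rangle_A;|c(i)\rangle_a]$ separates them whenever $|\phi_{i+(d_1-1)}\rangle$ is still present, whatever the colouring. So your claim that the colouring keeps every remaining state wholly inside or wholly outside each projector is false.

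\textbf{What is actually needed: an ordering.} Bob's chain projectors must contain both ancilla values $0$ and $c(i)$; the paper uses $P[|1\rangle_B;(|0\rangle,|2\rangle)_b]$, $P[|3\rangle_B;(|0\rangle,|1\rangle)_b]$, $P[|d_1-1\rangle_B;(|0\rangle,|2\rangle)_b]$, and so on. The states are removed in this order:
\begin{enumerate}
\item Bob's block projection. Here $c(2)\neq0$ is used, so the $|2\rangle_B$ component of $|\phi_2\rangle$ is not cut.
\item Alice's $P[|1\rangle_A;|0\rangle_a]$, isolating $|\phi_1\rangle$. This is legal because $|0-1\rangle_A|2\rangle_B$ is already gone.
\item Bob's $|1\rangle_B$ projection, isolating the wrap-around state $|0-(d_1-1)\rangle_A|1\rangle_B$. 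This is legal because $|\phi_1\rangle$ is already gone.
\item Alternately, Alice isolates $|\phi_{d_1-1}\rangle$ via $|d_1-1\rangle_A$ and Bob isolates $|\phi_{(d_1-2)+(d_1-1)}\rangle$ via $|d_1-1\rangle_B$, and so on downwards.
\end{enumerate}
Each projection is clean because the chain neighbour it would otherwise cut has already been removed.

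\textbf{Consequences for your colouring argument.} This is why the paper can take $c(2)=1$ and $c(i)=2$ for all $i\ge3$, which is not a proper colouring of your cycle. It is also why your plan to ``adjust $c(d_1-1)$'' at the wrap-around cannot work. The $|0\rangle_A$ branch of that state always carries colour $0$, so only the ordering resolves its interaction with $|\phi_1\rangle$. For the same reason, the odd-cycle obstruction you invoke for two colours is not the constraint that matters here.
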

	
	The detailed procedure is given in Appendix~\ref{A}. Since set (\ref{N1}) and set (\ref{N2}) share the same structural features, the distinguishing scheme in Theorem \ref{thm:n2} is similar to that in Theorem \ref{thm:n1}. The difference is that set (\ref{N2}) contains extra quantum states $\{|\phi_i\rangle\}_{i=3}^{d_1-1}$ and $\{|\phi_{i+(d_1-1)}\rangle\}_{i=2}^{d_1-2}$ compared with set (\ref{N1}) when $d_1$ and $d_2$ are relatively large. EPR state alone is insufficient to accomplish the local discrimination of (\ref{N2}), while the maximally entangled state $|\phi^{+}(3)\rangle$ can realize the task. Therefore, at most $\log_{2}3$ ebits of entanglement is consumed in Theorem \ref{thm:n2}. In fact, a single EPR state suffices when $d_1$ and $d_2$ are relatively small.
	
	\section{ DISCRIMINATION OF THE ORTHOGONAL PRODUCT SETS WITH GENUINE NONLOCALITY OF TYPE I} \label{Q4}
	\subsection{ Genuinely nonlocal sets in $\mathbb{C}^{d_1} \otimes \mathbb{C}^{d_2} \otimes \mathbb{C}^{d_2}\left ( 3\le (d_1-1)\le d_2\le d_3 \right )$}
	Building upon the nonlocal set in $\mathbb{C}^{d_1}\otimes\mathbb{C}^{d_2}$ introduced above, Lu et al.~\cite{Lu2024GenuinelyNS} further constructed a genuinely nonlocal set of type I in the tripartite system $\mathbb{C}^{d_1}\otimes\mathbb{C}^{d_2}\otimes\mathbb{C}^{d_3}$, where $3 \leq d_1 -1 \leq d_2 \leq d_3$. In the system $\mathbb{C}^{4} \otimes \mathbb{C}^{4} \otimes \mathbb{C}^{6}$, the genuinely nonlocal set consisting of the following 18 orthogonal product states is
	\begin{equation}\label{GNI1}
		\begin{aligned}
			&|\phi_1\rangle = |1\rangle_A |0-1\rangle_B |0\rangle_C, \\
			&|\phi_2\rangle = |2\rangle_A |0-2\rangle_B |0\rangle_C, \\
			&|\phi_3\rangle = |0-1\rangle_A |2\rangle_B |0\rangle_C, \\
			&|\phi_4\rangle = |0-2\rangle_A |1\rangle_B |0\rangle_C, \\
			&|\phi_5\rangle = |0-1\rangle_A |3\rangle_B |0\rangle_C, \\
			&|\phi_6\rangle = |0+1\rangle_A |2-3\rangle_B |0\rangle_C, \\
			&|\phi_7\rangle = |0+1+2\rangle_A |0+1+2+3\rangle_B |0\rangle_C, \\
			&|\phi_8\rangle = |3\rangle_A |1\rangle_B |0-1\rangle_C, \\
			&|\phi_9\rangle = |3\rangle_A |2\rangle_B |0-2\rangle_C, \\
			&|\phi_{10}\rangle = |3\rangle_A |3\rangle_B |0-3\rangle_C, \\
			&|\phi_{11}\rangle = |3\rangle_A |0-1\rangle_B |2\rangle_C, \\
			&|\phi_{12}\rangle = |3\rangle_A |0-2\rangle_B |3\rangle_C, \\
			&|\phi_{13}\rangle = |3\rangle_A |0-3\rangle_B |1\rangle_C, \\
			&|\phi_{14}\rangle = |3\rangle_A |0-1\rangle_B |4\rangle_C, \\
			&|\phi_{15}\rangle = |3\rangle_A |0-1\rangle_B |5\rangle_C, \\
			&|\phi_{16}\rangle = |3\rangle_A |0+1\rangle_B \bigl(|2\rangle_C + \omega_3 |4\rangle_C + \omega_3^2 |5\rangle_C\bigr), \\
			&|\phi_{17}\rangle = |3\rangle_A |0+1\rangle_B \bigl(|2\rangle_C + \omega_3^2 |4\rangle_C + \omega_3 |5\rangle_C\bigr), \\
			&|\phi_{18}\rangle = |3\rangle_A |0+1+2+3\rangle_B |0+1+\cdots+5\rangle_C.
		\end{aligned}
	\end{equation}
	
	\begin{theorem}\label{thm:GNI1}
		The genuinely nonlocal set  ${\textstyle \bigcup_{i=1}^{18}} \left | \phi_{i}   \right \rangle $ in Eq. (\ref{GNI1}) can be locally distinguished by using the entanglement resource
		configuration \(\{(\frac{7}{18},\vert \phi^{+}(2)\rangle_{AB});(\frac{11}{18} ,\vert \phi^{+}(2)\rangle_{BC})\}\).
	\end{theorem}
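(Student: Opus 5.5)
The plan is to first make a nontrivial orthogonality-preserving measurement that needs no entanglement. This splits the 18 states into two groups, each of which is essentially a copy of the bipartite nonlocal structure of Section~\ref{Q3}. Each group is then handled with a single EPR pair, shared between the appropriate pair of parties, using a protocol modelled on the proof of Theorem~\ref{thm:n1}. The fractional coefficients $\frac{7}{18}$ and $\frac{11}{18}$ are then the average resource consumption, assuming the 18 states are equiprobable.

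\textit{Step 1 (no entanglement).} Alice measures $\{|3\rangle_A\langle 3|,\ I-|3\rangle_A\langle 3|\}$. This measurement is orthogonality preserving and each outcome leaves the states unchanged, for two reasons. The states $|\phi_1\rangle,\dots,|\phi_7\rangle$ have Alice's support in $\mathrm{span}\{|0\rangle,|1\rangle,|2\rangle\}$ (note that $|\phi_7\rangle$ carries $|0+1+2\rangle_A$). The states $|\phi_8\rangle,\dots,|\phi_{18}\rangle$ all carry $|3\rangle_A$. The first outcome therefore leaves $\{|\phi_i\rangle\}_{i=1}^{7}$ with Charlie fixed at $|0\rangle_C$. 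This is a bipartite $A|B$ set in $\mathbb{C}^3\otimes\mathbb{C}^4$ with exactly the structure of set~(\ref{N2}) for $d_1=3$, $d_2=4$, where $|\phi_6\rangle$ plays the role of the Fourier-type state and $|\phi_7\rangle$ plays the role of the stopper $|S\rangle$. The second outcome leaves $\{|\phi_i\rangle\}_{i=8}^{18}$ with Alice fixed at $|3\rangle_A$. This is a $B|C$ set in $\mathbb{C}^4\otimes\mathbb{C}^6$ of the same form as (\ref{N2}), with $B$ in the role of the first party ($d=4$), $C$ in the role of the second ($d=6$), and $|\phi_{18}\rangle$ as the stopper.

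\textit{Step 2 (first branch).} Alice and Bob use one $|\phi^+(2)\rangle_{ab}$ and repeat the protocol of Theorem~\ref{thm:n1} almost verbatim. Alice measures $M_{1,1}=P[(|0\rangle,|1\rangle)_A;|0\rangle_a]+P[|2\rangle_A;|1\rangle_a]$ against its complement. Bob then separates $P[(|2\rangle,|3\rangle)_B;|0\rangle_b]$ from the rest. The branches then split as before into small sets, for instance $\{|\phi_3\rangle,|\phi_5\rangle\}$, $\{|\phi_6\rangle,|\phi_7\rangle\}$, $\{|\phi_1\rangle,|\phi_7\rangle\}$ and $\{|\phi_2\rangle,|\phi_4\rangle,|\phi_7\rangle\}$, and each of these is locally distinguishable by one further projection. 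This step is routine, since it is a strict sub-case of the $\mathbb{C}^3\otimes\mathbb{C}^5$ computation.

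\textit{Step 3 (second branch, the main obstacle).} Bob and Charlie use one $|\phi^+(2)\rangle_{bc}$. The concern is that the remark after Theorem~\ref{thm:n2} says a qubit pair may fail once the first party has dimension $4$, because of the extra states $|\phi_{10}\rangle=|3\rangle|0-3\rangle$ and $|\phi_{13}\rangle=|0-3\rangle|1\rangle$. The first thing to try is for Bob to measure $P[(|0\rangle,|1\rangle,|2\rangle)_B;|0\rangle_b]+P[|3\rangle_B;|1\rangle_b]$ against its complement. This entangles only the $|3\rangle_B$ components, namely those in $|\phi_{10}\rangle$, $|\phi_{13}\rangle$ and $|\phi_{18}\rangle$, with the ancilla; every other state becomes a product with $|00\rangle_{bc}$. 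Charlie can then use the ancilla as a flag. He measures $P[(|1\rangle,|2\rangle,|3\rangle)_C;|0\rangle_c]$, then $P[(|2\rangle,|4\rangle,|5\rangle)_C;|0\rangle_c]$, and so on, mimicking Steps~2--3 of Theorem~\ref{thm:n1}. The aim is to isolate $\{|\phi_{14}\rangle,|\phi_{15}\rangle,|\phi_{16}\rangle,|\phi_{17}\rangle\}$, $\{|\phi_{8}\rangle,|\phi_{11}\rangle\}$, $\{|\phi_9\rangle,|\phi_{12}\rangle\}$ and $\{|\phi_{10}\rangle,|\phi_{13}\rangle\}$, each together with the corresponding fragment of $|\phi_{18}\rangle$. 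Each such small set is locally distinguishable, since the states differ on at most one party in a product fashion.

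The point to verify carefully is that at every stage the cut of $|0-3\rangle_B$ (in $|\phi_{13}\rangle$) and of $|0-3\rangle_C$ (in $|\phi_{10}\rangle$) across the $\{0,1,2\}|\{3\}$ split is kept coherent by the ancilla flag. If this grouping fails, the fallback is the alternative Bob measurement $P[(|0\rangle,|1\rangle)_B;|0\rangle_b]+P[(|2\rangle,|3\rangle)_B;|1\rangle_b]$, with Charlie's measurements adjusted accordingly. Once both branches are shown to succeed with one ebit each, the theorem follows. The first branch occurs for $7$ of the $18$ equiprobable states and uses one EPR pair between $A$ and $B$, and the second occurs for the remaining $11$ and uses one EPR pair between $B$ and $C$, giving the configuration $\{(\frac{7}{18},|\phi^+(2)\rangle_{AB});(\frac{11}{18},|\phi^+(2)\rangle_{BC})\}$.
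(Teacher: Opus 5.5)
Your architecture matches the paper's: Alice's entanglement-free measurement $\{|3\rangle_A\langle 3|,\,I-|3\rangle_A\langle 3|\}$, one $|\phi^+(2)\rangle$ between $A$ and $B$ for $\{|\phi_i\rangle\}_{i=1}^{7}$ run as in Theorem~\ref{thm:n1}, one $|\phi^+(2)\rangle$ between $B$ and $C$ for $\{|\phi_i\rangle\}_{i=8}^{18}$, and the weights $\frac{7}{18},\frac{11}{18}$ from averaging. The $A|B$ branch is fine. The gap is the $B|C$ branch, which is the real content of the theorem and which you leave unproved.

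With your split $P[(|0\rangle,|1\rangle,|2\rangle)_B;|0\rangle_b]+P[|3\rangle_B;|1\rangle_b]$, the first Charlie measurement you name, $P[(|1\rangle,|2\rangle,|3\rangle)_C;|0\rangle_c]$, is not orthogonality preserving. It maps both $|\phi_{16}\rangle$ and $|\phi_{17}\rangle$ to the same vector $|0+1\rangle_B|2\rangle_C|00\rangle_{bc}$, and that vector also overlaps the image $|0+1+2\rangle_B|1+2+3\rangle_C|00\rangle_{bc}$ of $|\phi_{18}\rangle$. Copying Theorem~\ref{thm:n1} literally also fails. Your split leaves $|\phi_9\rangle=|2\rangle_B|0-2\rangle_C|00\rangle_{bc}$ unflagged. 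After the Fourier cut $P[(|2\rangle,|4\rangle,|5\rangle)_C;|0\rangle_c]$, its image $-|2\rangle_B|2\rangle_C|00\rangle_{bc}$ is no longer orthogonal to the image $|0+1+2\rangle_B|2+4+5\rangle_C|00\rangle_{bc}$ of $|\phi_{18}\rangle$. Your target groups, e.g.\ $\{|\phi_{14}\rangle,\ldots,|\phi_{17}\rangle\}$ without $|\phi_{11}\rangle$, are not produced by any cut you specify. The fallback is only named, not carried out. So nothing in the proposal shows that one ebit suffices on $B|C$.

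The missing principle is that the ancilla must flag every index $i$ whose superposition $|0-i\rangle$ on the other party will later be cut.

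The paper uses your fallback $P[(|0\rangle,|1\rangle)_B;|0\rangle_b]+P[(|2\rangle,|3\rangle)_B;|1\rangle_b]$, which puts $|\phi_9\rangle$ and $|\phi_{10}\rangle$ entirely into $|11\rangle_{bc}$. The protocol then runs as follows:
\begin{enumerate}
\item Charlie's Fourier cut $P[(|2\rangle,|4\rangle,|5\rangle)_C;|0\rangle_c]$ is now orthogonality preserving.
\item On that outcome, $P[|0-1\rangle_B;|0\rangle_b]$ splits the states into $\{|\phi_{11}\rangle,|\phi_{14}\rangle,|\phi_{15}\rangle\}$ and $\{|\phi_{16}\rangle,|\phi_{17}\rangle,|\phi_{18}\rangle\}$.
\item On the complement, $P[|1\rangle_B;|0\rangle_b]$, $P[|1\rangle_C;I_c]$ and $P[|3\rangle_B;|1\rangle_b]$ peel off $|\phi_8\rangle$, $|\phi_{13}\rangle$ and $|\phi_{10}\rangle$ in turn, each together with a fragment of $|\phi_{18}\rangle$.
\item This leaves $|\phi_9\rangle$, $|\phi_{12}\rangle$ and a fragment of $|\phi_{18}\rangle$, which are easily finished.
\end{enumerate}

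Your own split can also be rescued by reordering the cuts:
\begin{enumerate}
\item Charlie first measures $P[|3\rangle_C;|0\rangle_c]$. This is orthogonality preserving because $|\phi_{10}\rangle$ lies in $|11\rangle_{bc}$, and it isolates $\{|\phi_{12}\rangle,|\phi_{18}\rangle\}$.
\item With $|\phi_{12}\rangle$ gone, Bob's $P[|2\rangle_B;|0\rangle_b]$ is orthogonality preserving and isolates $\{|\phi_9\rangle,|\phi_{18}\rangle\}$.
\item Only now is the Fourier cut legal; its branch is finished as in Theorem~\ref{thm:n1}.
\item The rest, $\{|\phi_8\rangle,|\phi_{10}\rangle,|\phi_{13}\rangle,|\phi_{18}\rangle\}$, is finished by $P[|1\rangle_B;|0\rangle_b]$ and then $P[|1\rangle_C;I_c]$.
\end{enumerate}

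Whichever route you take, you must check the pairwise overlaps with the successive fragments of $|\phi_{18}\rangle$ explicitly. For $|\phi_{13}\rangle$, for example, orthogonality holds only because the $|00\rangle_{bc}$ and $|11\rangle_{bc}$ contributions cancel. Those checks are the proof.
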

	
	\textbf{Proof.} Since this set is locally reducible, when Alice performs the measurements\[
	\begin{aligned}
		\mathcal{M}_1 \equiv \bigl\{
		M_{1,1}=|3\rangle_A \langle 3|,\;
		M_{1,2}=I-M_{1,1}\bigr\},
	\end{aligned}
	\]the whole set can be locally reduced to two disjoint subsets. The corresponding results for \(M_{1,1}\) and \(M_{1,2}\) are \(\left \{ \left | \phi _{i}   \right \rangle  \right \} _{i=8}^{18} \) and \(\left \{ \left | \phi _{i}   \right \rangle  \right \} _{i=1}^{7} \), respectively.
	We then proceed as follows.
	
	\textit{Step 1.} For the branch corresponding to \(M_{1,2}\), to locally distinguish the states, Alice and Bob share a maximally entangled state \(\vert \phi^{+}(2)\rangle_{ab}\). The initial state is
	\[
	\begin{aligned}
		\left|\phi\right\rangle_{ABC} \otimes \left|\phi^{+}(2)\right\rangle_{ab_1},
	\end{aligned}
	\] where $a$ and $b_1$ are the ancillary systems of Alice and Bob, respectively. Then Alice performs the measurement
	\[
	\begin{aligned}
		\mathcal{M}_2 \equiv \bigl\{
		M_{2,1}=P\bigl[(|0\rangle,|1\rangle)_A;|0\rangle_a\bigr]+P\bigl[|2\rangle_A;|1\rangle_a\bigr],\;
		M_{2,2}=I-M_{2,1}\bigr\}.
	\end{aligned}
	\]
	Suppose that the outcome corresponding to \(M_{2,1}\) occurs. Then
	\[
	\begin{aligned}
		&|\phi_1\rangle \to  |1\rangle_A |0-1\rangle_B |0\rangle_C|00\rangle_{ab_1}, \\
		&|\phi_2\rangle \to  |2\rangle_A |0-2\rangle_B |0\rangle_C|11\rangle_{ab_1}, \\
		&|\phi_3\rangle \to  |0-1\rangle_A |2\rangle_B |0\rangle_C|00\rangle_{ab_1}, \\
		&|\phi_4\rangle \to  |0\rangle_A |1\rangle_B |0\rangle_C|00\rangle_{ab_1}-|2\rangle_A |1\rangle_B |0\rangle_C|11\rangle_{ab_1}, \\
		&|\phi_5\rangle \to  |0-1\rangle_A |3\rangle_B |0\rangle_C|00\rangle_{ab_1}, \\
		&|\phi_6\rangle \to  |0+1\rangle_A |2-3\rangle_B |0\rangle_C|00\rangle_{ab_1}, \\
		&|\phi_7\rangle \to  |0+1\rangle_A |0+1+2+3\rangle_B |0\rangle_C|00\rangle_{ab_1}+|2\rangle_A |0+1+2+3\rangle_B |0\rangle_C|11\rangle_{ab_1}. \\
	\end{aligned}
	\]
	
	\textit{Step 2.} Bob performs the measurement
	\[
	\begin{aligned}
		\mathcal{M}_3\equiv \bigl\{
		M_{3,1}=P[(|2\rangle,|3\rangle)_{B};|0\rangle_{b_1}],
		\;
		M_{3,2}=I-M_{3,1}
		\bigr\}.
	\end{aligned}
	\]
	If outcome \(M_{3,1}\) occurs, then the states are \(|\phi_{3}\rangle ,~|\phi_{5} \rangle,~|\phi_{6} \rangle,\) and $\left |\phi_{7} \right \rangle \to|0+1\rangle_{A}|2+3\rangle_{B}|0\rangle_C|00\rangle_{ab_1}$, which are locally distinguishable. Otherwise, the state belongs to the set of remaining states \(\{|\phi_{1}\rangle ,|\phi_{2}\rangle ,|\phi_{4} \rangle,\left |\phi_{7}  \right \rangle \to|0+1\rangle_{A}|0+1\rangle_{B}|0\rangle_C|00\rangle_{ab_1}+|2\rangle_{A}|0+1+2+3\rangle_{B}|0\rangle_C|11\rangle_{ab_1}\).
	
	\textit{Step 3.} Alice performs the measurement
	\[
	\begin{aligned}
		\mathcal{M}_4\equiv \bigl\{
		M_{4,1}=P[|1\rangle_A;|0\rangle_a],
		\;
		M_{4,2}=I-M_{4,1}
		\bigr\}.
	\end{aligned}
	\]
	If outcome \(M_{4,1}\) occurs, then the states  are \(|\phi_{1} \rangle\) and $\left |\phi_{7}  \right \rangle \to|1\rangle_{A}|0+1\rangle_{B}|0\rangle_C|00\rangle_{ab_1}$, which are LOCC distinguishable; if outcome \(M_{4,2}\) occurs, then the states are \(|\phi_{2}\rangle ,~|\phi_{4}\rangle,\) and $\left |\phi_{7}  \right \rangle \to|0\rangle_{A}|0+1\rangle_{B}|0\rangle_C|00\rangle_{ab_1}+|2\rangle_{A}|0+1+2+3\rangle_{B}|0\rangle_C|11\rangle_{ab_1}$. Next, Bob performs
	\[
	\mathcal{M}_4' \equiv \bigl\{
	M_{4,1}'=P[|1\rangle_{B};I_{b_1}],
	\;
	M_{4,2}'=I-M_{4,1}'
	\bigr\}.
	\]
	The corresponding results for \(M_{4,1}'\) and \(M_{4,2}'\) are \(\{|\phi_{4} \rangle, \left |S_{7} \right   \rangle \to|0\rangle_{A}|1\rangle_{B}|0\rangle_C|00\rangle_{ab_1}+|2\rangle_{A}|1\rangle_{B}|0\rangle_C|11\rangle_{ab_1}\}\) and \(\{|\phi_{2} \rangle,\left |S_{7}  \right \rangle \to|0\rangle_{A}|0\rangle_{B}|0\rangle_C|00\rangle_{ab_1}+|2\rangle_{A}|0+2+3+4\rangle_{B}|0\rangle_C|11\rangle_{ab_1}\}\), respectively. These sets are all LOCC distinguishable.
	
	\textit{Step 1$'$.} For the branch corresponding to \(M_{1,1}\), Alice and Bob share a maximally entangled state \(\vert \phi^{+}(2)\rangle_{b_2c}\). The initial state is
	\[
	\begin{aligned}
		\left|\phi\right\rangle_{ABC} \otimes \left|\phi^{+}(2)\right\rangle_{b_2c},
	\end{aligned}
	\] where $b_2$ and $c$ are the ancillary systems of Bob and Charlie, respectively. Then Bob performs the measurement
	\[
	\begin{aligned}
		\mathcal{M}_5 \equiv \bigl\{
		M_{5,1}=P\bigl[(|0\rangle,|1\rangle)_B;|0\rangle_{b_2}\bigr]+P\bigl[(|2\rangle,|3\rangle)_B;|1\rangle_{b_2}\bigr],\;
		M_{5,2}=I-M_{5,1}\bigr\}.
	\end{aligned}
	\]
	Suppose that the outcome corresponding to \(M_{5,1}\) occurs. Then
	\[
	\begin{aligned}
		&|\phi_8\rangle \to  |3\rangle_A |1\rangle_B |0-1\rangle_C|00\rangle_{b_2c}, \\
		&|\phi_9\rangle \to  |3\rangle_A |2\rangle_B |0-2\rangle_C|11\rangle_{b_2c}, \\
		&|\phi_{10}\rangle \to  |3\rangle_A |3\rangle_B |0-3\rangle_C|11\rangle_{b_2c}, \\
		&|\phi_{11}\rangle \to  |3\rangle_A |0-1\rangle_B |2\rangle_C|00\rangle_{b_2c}, \\
		&|\phi_{12}\rangle \to  |3\rangle_A |0\rangle_B |3\rangle_C|00\rangle_{b_2c}-|3\rangle_A |2\rangle_B |3\rangle_C|11\rangle_{b_2c}, \\
		&|\phi_{13}\rangle \to  |3\rangle_A |0\rangle_B |1\rangle_C|00\rangle_{b_2c}-|3\rangle_A |3\rangle_B |1\rangle_C|11\rangle_{b_2c}, \\
		&|\phi_{14}\rangle \to  |3\rangle_A |0-1\rangle_B |4\rangle_C|00\rangle_{b_2c}, \\
		&|\phi_{15}\rangle \to  |3\rangle_A |0-1\rangle_B |5\rangle_C|00\rangle_{b_2c}, \\
		&|\phi_{16}\rangle \to  |3\rangle_A |0+1\rangle_B \bigl(|2\rangle_C + \omega_3 |4\rangle_C + \omega_3^2 |5\rangle_C\bigr)|00\rangle_{b_2c}, \\
		&|\phi_{17}\rangle \to |3\rangle_A |0+1\rangle_B \bigl(|2\rangle_C + \omega_3^2 |4\rangle_C + \omega_3 |5\rangle_C\bigr)|00\rangle_{b_2c}, \\
		&|\phi_{18}\rangle \to |3\rangle_A |0+1\rangle_B |0+1+\cdots+5\rangle_C|00\rangle_{b_2c}+|3\rangle_A |2+3\rangle_B |0+1+\cdots+5\rangle_C|11\rangle_{b_2c}.
	\end{aligned}
	\]
	
	\textit{Step 2$'$.} Charlie then performs the measurement
	\[
	\begin{aligned}
		\mathcal{M}_6\equiv \bigl\{
		M_{6,1}=P[(|2\rangle,|4\rangle,|5\rangle)_C;|0\rangle_c],
		\;
		M_{6,2}=I-M_{6,1}
		\bigr\}.
	\end{aligned}
	\]
	If outcome \(M_{6,1}\) occurs, the states are \(|\phi_{11}\rangle ,~|\phi_{14}\rangle ,~|\phi_{15} \rangle,~|\phi_{16}\rangle ,~|\phi_{17} \rangle,\) and $\left |\phi_{18} \right  \rangle \to|3\rangle_A |0+1\rangle_B |2+4+5\rangle_C|00\rangle_{b_2c}$. For outcome \(M_{6,2}\), the remaining possibilities are \(|\phi_{8}\rangle ,~|\phi_{9}\rangle ,~|\phi_{10} \rangle,~|\phi_{12}\rangle ,~|\phi_{13} \rangle,\) and $\left |\phi_{18} \right  \rangle \to|3\rangle_A |0+1\rangle_B |0+1+3\rangle_C|00\rangle_{b_2c}+|3\rangle_A |2+3\rangle_B |0+1+\cdots+5\rangle_C|11\rangle_{b_2c}$.
	
	\textit{Step 3$'$.} For the branch corresponding to \(M_{6,1}\), Bob performs
	\[
	\mathcal{M}_7 \equiv \bigl\{
	M_{7,1}=P[|0-1\rangle_{B};|0\rangle_{b_2}],
	\;
	M_{7,2}=I-M_{7,1}
	\bigr\}.
	\]
	The corresponding results for \(M_{7,1}\) and \(M_{7,2}\) are \(\{|\phi_{11}\rangle ,|\phi_{14}\rangle ,|\phi_{15} \rangle\}\) and \(\{|\phi_{16}\rangle ,|\phi_{17} \rangle,|\phi_{18} \rangle\}\), respectively. They are all locally distinguishable.
	
	For the branch corresponding to \(M_{6,2}\), Bob performs
	\[
	\mathcal{M}_8 \equiv \bigl\{
	M_{8,1}=P[|1\rangle_{B};|0\rangle_{b_2}],
	\;
	M_{8,2}=I-M_{8,1}
	\bigr\}.
	\]
	The corresponding results for \(M_{8,1}\) are \(|\phi_{8} \rangle\) and $\left |\phi_{18}\right  \rangle \to|3\rangle_A |1\rangle_B |0+1+3\rangle_C|00\rangle_{b_2c}$, which are LOCC distinguishable. Next, Charlie performs
	\[
	\mathcal{M}_9 \equiv \bigl\{
	M_{9,1}=P[|1\rangle_{C};I_{c}],
	\;
	M_{9,2}=I-M_{9,1}
	\bigr\}.
	\]
	If outcome \(M_{9,1}\) occurs, the states are \(|\phi_{13} \rangle\) and $\left |\phi_{18} \right  \rangle \to|3\rangle_A |0\rangle_B |1\rangle_C|00\rangle_{b_2c}+|3\rangle_A |2+3\rangle_B |1\rangle_C|11\rangle_{b_2c}$. For outcome \(M_{9,2}\), the remaining possibilities are \(|\phi_{9} \rangle,~|\phi_{10} \rangle,~|\phi_{12} \rangle,\) and $\left |\phi_{18} \right  \rangle \to|3\rangle_A |0\rangle_B |0+3\rangle_C|00\rangle_{b_2c}+|3\rangle_A |2+3\rangle_B |0+2+\cdots+5\rangle_C|11\rangle_{b_2c}$. Then, Bob performs
	\[
	\mathcal{M}_{10} \equiv \bigl\{
	M_{10,1}=P[|3\rangle_{B};|1\rangle_{b_2}],
	\;
	M_{10,2}=I-M_{10,1}
	\bigr\}.
	\]
	The corresponding results for \(M_{10,1}\) are \(|\phi_{10} \rangle\) and $\left |\phi_{18}\right  \rangle \to|3\rangle_A |3\rangle_B |0+2+3+4+5\rangle_C|11\rangle_{b_2c}$, which are LOCC distinguishable. Otherwise, the state is one of the remaining possibilities \(|\phi_{9} \rangle\) and \(|\phi_{12} \rangle\). They are all locally distinguishable.
	
	If outcome $M_{2,2}$ in Step 1 or outcome $M_{5,2}$ in Step 1$'$ occurs, a similar method gives an LOCC protocol for distinguishing the states in set (\ref{GNI1}).
	\qed
	
	Since the set is locally reducible, we first divide it into two subsets. We then discriminate each subset separately. Noting that the two subsets possess structure analogous to the nonlocal set in Eq.~(\ref{N2}). So, we establish entanglement resources between the corresponding subsystems. In this protocol, we consume a total of 1 ebit of entanglement. We then extend the protocol to the \( \mathbb{C}^{d_1} \otimes \mathbb{C}^{d_2}\otimes \mathbb{C}^{d_3} \) system.
	
	In $\mathbb{C}^{d_1} \otimes \mathbb{C}^{d_2}\otimes \mathbb{C}^{d_3}\left ( 3\le (d_1-1)\le d_2\le d_3 \right )$, the genuinely nonlocal set
	of type I in Ref.~\cite{Lu2024GenuinelyNS} is
	\begin{equation}\label{GNI2}
		\begin{aligned}
			&|\phi_i\rangle = |i\rangle_A |0-i\rangle_B |0\rangle_C, \quad 1 \leq i \leq d_1-2, \\
			&|\phi_{i+(d_1-2)}\rangle = |0-i\rangle_A |m\rangle_B |0\rangle_C, \quad 1 \leq i \leq d_1-3, \ m = i+1; \ i = d_1-2, \ m = 1, \\
			&|\phi_{j+(d_1-2)}\rangle = |0-1\rangle_A |j\rangle_B |0\rangle_C, \quad d_1-1 \leq j \leq d_2-1, \\
			&|\phi_{s+(d_1+d_2-3)}\rangle = |0+1\rangle_A |\eta\rangle_B |0\rangle_C, \quad 1 \leq s \leq d_2-d_1+1, \\
			&|\phi_{2d_2-1}\rangle = |0+1+\cdots+(d_1-2)\rangle_A  |0+1+\cdots+(d_2-1)\rangle_B |0\rangle_C,\\
			&|\phi_{i+2d_2-1}\rangle = |d_1-1\rangle_A |i\rangle_B |0-i\rangle_C, \quad 1 \leq i \leq d_2-1, \\
			&|\phi_{i+3d_2-2}\rangle = |d_1-1\rangle_A |0-i\rangle_B |m\rangle_C, \quad 1 \leq i \leq d_2-2,  \ m = i+1; \ i = d_2-1, \ m = 1, \\
			&|\phi_{j+3d_2-2}\rangle = |d_1-1\rangle_A |0-1\rangle_B |j\rangle_C, \quad d_2 \leq j \leq d_3-1, \\
			&|\phi_{s+3d_2+d_3-3}\rangle = |d_1-1\rangle_A |0+1\rangle_B |\eta\rangle_C, \quad 1 \leq s \leq d_3-d_2, \\
			&|\phi_{2(d_2+d_3)-2}\rangle = |d_1-1\rangle_A |0+1+\cdots+(d_2-1)\rangle_B |0+1+\cdots+(d_3-1)\rangle_C,
		\end{aligned}
	\end{equation}
	where $|\eta\rangle_B = |2\rangle_B + \sum_{t=1}^{d_2-d_1+1} \omega_{d_2-d_1+2}^{st} |t+d_1-2\rangle_B$, $|\eta\rangle_C = |2\rangle_C + \sum_{t=1}^{d_3-d_2} \omega_{d_3-d_2+1}^{st} |t+d_2-1\rangle_C$.
	
	\begin{theorem}\label{thm:GNI2}
		The set in Eq.~(\ref{GNI2}) can be locally distinguished by using the resource configuration \(\{(\frac{r}{s},\vert \phi^{+}(3)\rangle_{AB});(\frac{r'}{s},\vert \phi^{+}(3)\rangle_{BC})\}\),  where \[
		\begin{aligned}
			s= 2\left (d_2+d_3 \right )-2 ,\;
			r=2 d_2-1   ,\;
			r'=2d_3-1  .
		\end{aligned}
		\]
	\end{theorem}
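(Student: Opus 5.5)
Our plan follows the proof of Theorem~\ref{thm:GNI1}, using Theorem~\ref{thm:n2} as the bipartite building block. First, Alice performs the orthogonality-preserving measurement
\[
\mathcal{M}_1\equiv\bigl\{M_{1,1}=|d_1-1\rangle_A\langle d_1-1|,\;M_{1,2}=I-M_{1,1}\bigr\}.
\]
Every state in Eq.~(\ref{GNI2}) is an eigenvector of $M_{1,1}$. The first $2d_2-1$ states have Alice's support in $\{|0\rangle,\dots,|d_1-2\rangle\}$, and the last $2d_3-1$ states carry $|d_1-1\rangle_A$. Hence $M_{1,2}$ leaves exactly $\{|\phi_i\rangle\}_{i=1}^{2d_2-1}$ and $M_{1,1}$ leaves $\{|\phi_i\rangle\}_{i=2d_2}^{2(d_2+d_3)-2}$, with every state undisturbed. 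Assuming the states are equiprobable, the first branch occurs with frequency $r/s$ and the second with frequency $r'/s$, where $s=2(d_2+d_3)-2$ is the cardinality of the set. This accounts for the averaged weights in the resource configuration. So we only need to show that each branch can be distinguished with one copy of $|\phi^{+}(3)\rangle$, shared between $AB$ in the first branch and between $BC$ in the second.

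In the branch $M_{1,2}$, Charlie's system is always $|0\rangle_C$ and can be discarded. What remains is a set on $\mathbb{C}^{d_1-1}\otimes\mathbb{C}^{d_2}$ with $3\le d_1-1\le d_2$. Under the substitution $d_1\mapsto d_1-1$ it has exactly the form of Eq.~(\ref{N2}): the ``$|i\rangle|0-i\rangle$'' states, the cyclic ``$|0-i\rangle|m\rangle$'' states, the ``$|0-1\rangle|j\rangle$'' states, the Fourier states $|0+1\rangle|\eta\rangle_B$, and the stopper state. We therefore invoke Theorem~\ref{thm:n2} directly, which distinguishes this set with $(1,|\phi^{+}(3)\rangle_{AB})$.

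In the branch $M_{1,1}$, Alice's system is fixed at $|d_1-1\rangle_A$. The remaining states on $\mathbb{C}^{d_2}_B\otimes\mathbb{C}^{d_3}_C$ have the form of Eq.~(\ref{N2}) with $(d_1,d_2)\mapsto(d_2,d_3)$, with Bob playing the role of $A$ and Charlie the role of $B$. Note that $d_2\ge3$. Applying Theorem~\ref{thm:n2} again with Bob and Charlie sharing $|\phi^{+}(3)\rangle_{b_2c}$ completes this branch.

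The main obstacle is checking that the index bookkeeping really matches Eq.~(\ref{N2}). Two points need care. First, the Fourier vectors $|\eta\rangle_B$ use $d_2-d_1+1$ phases and the range $t+d_1-2$, which is exactly Eq.~(\ref{N2}) with $d_1\to d_1-1$; for $|\eta\rangle_C$ the correspondence is with $(d_2,d_3)$. Second, the edge cases must not fall outside the hypotheses of Theorem~\ref{thm:n2}: $d_1-1=d_2$, where the Fourier family is empty, and $d_2=d_3$. If the correspondence is imperfect, our fallback is to write out the Appendix~\ref{A} protocol explicitly in each branch. Alice (respectively Bob) applies the ancilla-controlled projector that embeds $\{|0\rangle,|1\rangle\}$, $\{|2\rangle,\dots\}$ and the rest into three ancilla levels. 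The partner then separates the ``$j$-block'' from the rest, and the sequential projections of Steps 2--3 in Theorem~\ref{thm:GNI1} follow. At each stage we check that the images of the stopper state remain orthogonal and locally distinguishable.
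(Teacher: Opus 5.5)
Your proposal is correct and matches the paper's proof. Appendix~\ref{B} likewise has Alice project onto $|d_1-1\rangle_A$ to split the set into $\{|\phi_i\rangle\}_{i=1}^{2d_2-1}$ and $\{|\phi_i\rangle\}_{i=2d_2}^{2(d_2+d_3)-2}$, then runs the Appendix~\ref{A} protocol of Theorem~\ref{thm:n2} with $|\phi^{+}(3)\rangle$ shared by $AB$ in one branch and by $BC$ in the other. The only difference is presentational: the paper writes out the Appendix~\ref{A} measurements in each branch, whereas you invoke Theorem~\ref{thm:n2} directly, and your index correspondences $d_1\mapsto d_1-1$ and $(d_1,d_2)\mapsto(d_2,d_3)$ do hold exactly.
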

	
	We present the detailed procedure in Appendix~\ref{B}. Similarly, we first divide the set into two parts \(\left \{ \left | \phi _{i}   \right \rangle  \right \} _{i=1}^{2d_2-1}\)  and \(\left \{ \left | \phi _{i}   \right \rangle  \right \} _{i=2d_2}^{  2\left (d_2+d_3 \right )-2 }\). Then, referring to the discrimination protocol on the $\mathbb{C}^{d_1} \otimes \mathbb{C}^{d_2}$ system in Theorem \ref{thm:n2}, we let different subsystems share the $|\phi^+(3)\rangle$ state in each part. In this protocol, we consume a total of $\log_{2}{3} $ ebits of entanglement. Next, we consider the orthogonal product sets with genuine nonlocality of type I in \( \bigotimes_{i=1}^{n} \mathbb{C}^{d_i} \).
	\subsection{Genuinely nonlocal sets in $\otimes_{i=1}^{n} \mathbb{C}^{d_i}$ $(3 \leq (d_1-1) \leq d_2 \leq \dots \leq d_n, n \geq 3)$}
	In $\otimes_{i=1}^{n} \mathbb{C}^{d_i}$ $(3 \leq (d_1-1) \leq d_2 \leq \dots \leq d_n, n \geq 3)$,
	the genuinely nonlocal set of type I~\cite{Lu2024GenuinelyNS} is
	\begin{equation}\label{GNI3}
		\begin{aligned}
			&G_1 = \bigl\{|x_{m_1}\rangle_{A_1}|y_{m_1}\rangle_{A_2}|0\rangle_{A_3}|0\rangle_{A_4}\cdots |0\rangle_{A_{n-3}}|0\rangle_{A_{n-2}}|0\rangle_{A_{n-1}}|1\rangle_{A_n}\bigr\}, \\
			&G_2 = \bigl\{|1\rangle_{A_1}|x_{j_2}\rangle_{A_2}|z_{j_2}\rangle_{A_3}|0\rangle_{A_4}\cdots |0\rangle_{A_{n-3}}|0\rangle_{A_{n-2}}|0\rangle_{A_{n-1}}|0\rangle_{A_n}\bigr\}, \\
			&G_3 = \bigl\{|0\rangle_{A_1}|x_{j_3}\rangle_{A_2}|1\rangle_{A_3}|z_{j_3}\rangle_{A_4}\cdots |0\rangle_{A_{n-3}}|0\rangle_{A_{n-2}}|0\rangle_{A_{n-1}}|0\rangle_{A_n}\bigr\}, \\
			&~~\vdots \\
			&G_{n-2} = \bigl\{|0\rangle_{A_1}|x_{j_{n-2}}\rangle_{A_2}|0\rangle_{A_3}|0\rangle_{A_4}\cdots |0\rangle_{A_{n-3}}|1\rangle_{A_{n-2}}|z_{j_{n-2}}\rangle_{A_{n-1}}|0\rangle_{A_n}\bigr\}, \\
			&G_{n-1} = \bigl\{|d_1-1\rangle_{A_1}|x_{j_{n-1}}\rangle_{A_2}|0\rangle_{A_3}|0\rangle_{A_4}\cdots |0\rangle_{A_{n-3}}|0\rangle_{A_{n-2}}|1\rangle_{A_{n-1}}|z_{j_{n-1}}\rangle_{A_n}\bigr\},
		\end{aligned}
	\end{equation}
	where $\{|x_{m_1}\rangle|y_{m_1}\rangle, m_1 = 1,2,\dots,2d_2-1\}$ are $2d_2-1$
	locally indistinguishable orthogonal product states in
	$\mathbb{C}^{d_1-1}\otimes\mathbb{C}^{d_2}$. And $\{|x_{j_{k-1}}\rangle|z_{j_{k-1}}\rangle,
	j_{k-1} = 1,2,\dots,2d_k-1\}$ are $2d_k-1$ locally indistinguishable
	orthogonal product states in $\mathbb{C}^{d_2}\otimes\mathbb{C}^{d_k}$
	($k = 3,\cdots,n$).
	
	\begin{theorem}\label{thm:GNI3}
		The set $\bigcup_{i=1}^{n-1}G_i$ provided by Eq. (\ref{GNI3}) can be locally distinguished by using the resource configuration \(\{(\frac{r_1}{s},\vert \phi^{+}(3)\rangle_{A_1A_2});(\frac{r_2}{s},\vert \phi^{+}(3)\rangle_{A_2A_3});(\frac{r_3}{s},\vert \phi^{+}(3)\rangle_{A_2A_4});\cdots;(\frac{r_{n-1}}{s},\vert \phi^{+}(3)\rangle_{A_2A_n})\}\),  where \[
		\begin{aligned}
			s= \sum_{k=2}^{n} \left ( 2d_k-1 \right ),~r_i=2 d_{i+1}-1~for~i=1,2,\ldots,n-1.
		\end{aligned}
		\]
	\end{theorem}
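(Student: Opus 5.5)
The plan is to copy the structure of the proof of Theorem~\ref{thm:GNI1} and its extension in Theorem~\ref{thm:GNI2}. First use local reducibility to split $\bigcup_{i=1}^{n-1}G_i$ into the $n-1$ subsets $G_1,\dots,G_{n-1}$ without any entanglement. Then discriminate each $G_k$ with a single $|\phi^{+}(3)\rangle$ shared by the two parties that carry the nonlocal bipartite subset. Averaging over the $n-1$ branches then gives the stated fractions: $|G_1|=2d_2-1=r_1$, $|G_{k-1}|=2d_k-1=r_{k-1}$ for $k=3,\dots,n$, and $s=\sum_k|G_k|$. So $\frac{r_i}{s}$ is exactly the probability of landing in branch $G_i$ when the states are equiprobable, and in that branch one $|\phi^{+}(3)\rangle$ is consumed between the relevant pair. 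In every branch exactly one $|\phi^{+}(3)\rangle$ is used, for a total of $\log_2 3$ ebits.

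\emph{Step 1 (local reduction).} I would eliminate branches with local projective measurements on the ``flag'' parties $A_1,A_3,\dots,A_n$.
\begin{itemize}
\item $A_n$ measures $\{|1\rangle\langle 1|,I-|1\rangle\langle 1|\}$. The state $|1\rangle_{A_n}$ occurs only in $G_1$; this needs a check of how the $z_{j_{n-1}}$ on $A_n$ in $G_{n-1}$ interact with $|1\rangle$. If that overlap is a problem, I would instead let $A_1$ measure $|d_1-1\rangle$ first, which isolates $G_{n-1}$.
\item Next $A_{n-1}$ projects onto $|1\rangle$ versus its complement, then $A_{n-2}$, and so on down to $A_3$. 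The $|1\rangle_{A_{k}}$ flag in $G_{k}$, together with $|0\rangle$ flags elsewhere, separates $G_2,\dots,G_{n-2}$ in turn.
\end{itemize}
Each such measurement must be orthogonality preserving and leave every state of the surviving subsets intact or annihilated. This holds because within each $G_k$ the flag coordinates are fixed computational basis vectors. Where a $z$-vector has support overlapping a flag value, one refines the projector (as Alice's $|3\rangle_A$ projection does in Theorem~\ref{thm:GNI1}), using that the nonlocal $2d-1$ sets of Eq.~(\ref{N2}) are built from basis vectors $|0\rangle,\dots$ in a controlled way.

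\emph{Step 2 (branch protocols).} In branch $G_1$ the states are, up to fixed product flags, the set of Eq.~(\ref{N2}) in $\mathbb{C}^{d_1-1}\otimes\mathbb{C}^{d_2}$ on $A_1A_2$. By Theorem~\ref{thm:n2}, one $|\phi^{+}(3)\rangle_{A_1A_2}$ suffices. In branch $G_{k-1}$ ($k\ge3$) the states are the nonlocal set on $A_2A_k$ in $\mathbb{C}^{d_2}\otimes\mathbb{C}^{d_k}$, tensored with fixed states on the other parties, so Theorem~\ref{thm:n2} applies with one $|\phi^{+}(3)\rangle_{A_2A_k}$. Here the dimension hypothesis $3\le d_2\le d_k$ of Theorem~\ref{thm:n2} holds by the ordering $d_2\le\dots\le d_n$. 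The other parties only hold known product states and play no role.

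\emph{Main obstacle.} The main obstacle is Step 1, namely verifying that the flag measurements really separate the $G_k$ orthogonally for general $n$. The subsets $G_2,\dots,G_{n-2}$ differ only by the position of the $|1\rangle$ flag and of the $z$-coordinate, and the $z_{j}$ states may have support on $|1\rangle$, for instance the factor $|0-1\rangle$. I would first try ordering the measurements from $A_n$ downward, checking at each stage that any state with nonzero component on the projected flag value lies entirely in a single remaining subset. If that fails, I would use projectors of the form $|1\rangle\langle1|_{A_k}\otimes$ (restrictions on $A_{k+1}$ via classical communication), mimicking the coarse-graining already used in Theorem~\ref{thm:GNI1}.
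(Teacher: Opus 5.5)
Your overall architecture matches the paper's. Entanglement-free flag projections split the set into $G_1,\dots,G_{n-1}$, and then one $|\phi^{+}(3)\rangle$ is spent per branch via Theorem~\ref{thm:n2}, with $r_i/s$ the branch probabilities. Your fallback of letting $A_1$ project onto $|d_1-1\rangle$ first is also the paper's opening move.

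The gap is in the rest of Step~1, which you correctly identify as the crux but resolve in an order that fails. Once $G_{n-1}$ and $G_1$ are removed, no surviving subset has a $|1\rangle$ flag on $A_{n-1}$. The only subset that is not $|0\rangle$ there is $G_{n-2}$, whose $z$-factor lives on $A_{n-1}$. So projecting $A_{n-1}$ onto $|1\rangle$ isolates nothing and breaks orthogonality inside $G_{n-2}$. With the states of Eq.~(\ref{N2}), the image of $|1\rangle_{A_2}|0-1\rangle_{A_{n-1}}$ under $I-|1\rangle\langle 1|$ is $|1\rangle_{A_2}|0\rangle_{A_{n-1}}$. This has overlap $1$ with the image of the stopper $|0+1+\cdots+(d_2-1)\rangle_{A_2}|0+1+\cdots+(d_{n-1}-1)\rangle_{A_{n-1}}$.

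The same collision recurs at every $A_k$ ($3\le k\le n-2$) in your descending sequence. There $G_k$ carries its flag on $A_k$ while $G_{k-1}$ still carries its $z$-factor on $A_k$. Moreover, your sequence never uses $A_1$'s $|1\rangle$ flag, which is where $G_2$ is marked. Refining the projector on $A_k$ cannot fix this. The $z$-vectors of $G_{k-1}$ span $\mathbb{C}^{d_k}$, so any POVM element on $A_k$ that fires on $|1\rangle$ also fires on some state of $G_{k-1}$.

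The missing observation is that this chain forces an ascending order: $G_{k-1}$ must already be gone when $A_k$ measures. The working sequence is:
\begin{enumerate}
\item $A_1$ projects onto $|d_1-1\rangle$, isolating $G_{n-1}$, since the $x$-factors of $G_1$ lie in $\mathrm{span}\{|0\rangle,\dots,|d_1-2\rangle\}$.
\item $A_n$ projects onto $|1\rangle$, isolating $G_1$.
\item $A_1$ projects onto $|1\rangle$, isolating $G_2$; this is safe now that $G_1$, whose $x$-factor on $A_1$ overlaps $|1\rangle$, is gone.
\item $A_3,A_4,\dots,A_{n-3}$ successively project onto $|1\rangle$, each $A_k$ isolating $G_k$, which leaves $G_{n-2}$.
\end{enumerate}
At every stage each surviving state lies entirely in the range or entirely in the kernel of the projector, so each measurement is orthogonality preserving. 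This is the order the paper uses in Step~3 of Appendix~\ref{E}; its Appendix~\ref{C} is garbled at exactly this point. With Step~1 ordered this way, your Step~2 and your resource accounting go through as in the paper.
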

	
	Since each subset $G_i$ $(i=1,2,\ldots,n-1)$ can be perfectly distinguished via a protocol analogous to that given in Theorem \ref{N2}, the problem reduces to distinguishing among the subsets $G_i$. Moreover, we observe that the set in Eq.~(\ref{GNI3}) is still locally reducible, no additional entanglement is required at this step. After these subsets $G_i$ $(i=1,2,\ldots,n-1)$ are discriminated, we allow different subsystems to share entangled state \( \vert \phi^{+}(3) \rangle \) for distinct subsets $G_i$, whereby the discrimination task is completed. So, the protocol in Theorem \ref{thm:GNI3} consumes a total of \(\log_{2}{3} \) ebits of entanglement. The detailed procedure is given in Appendix~\ref{C}.
	
	\section{DISCRIMINATION OF THE ORTHOGONAL PRODUCT SETS WITH GENUINE NONLOCALITY OF TYPE II}\label{Q5}
	\subsection{ Genuinely nonlocal sets in $\mathbb{C}^{d_1} \otimes \mathbb{C}^{d_2} \otimes \mathbb{C}^{d_3}$ ($3 \leq d_1 \leq d_2 \leq d_3 )$}
	In this section we will increase the number of auxiliary subsystems to explore the local discrimination protocol. In Ref.~\cite{Lu2024GenuinelyNS}, genuinely nonlocal sets of type II were constructed. Here, we propose local discrimination protocols by using the states $|G(2)\rangle=\frac{1}{\sqrt{2}}(|000\rangle+|111\rangle)$ and $|G(3)\rangle=\frac{1}{\sqrt{3}}(|000\rangle+|111\rangle=|000\rangle+|222\rangle)$. In $\mathbb{C}^3 \otimes \mathbb{C}^4 \otimes \mathbb{C}^5$, the type-II genuinely nonlocal set consisting of the following 14 orthogonal product states is
	\begin{equation}\label{GNII1}
		\begin{aligned}
			&|\phi_1\rangle = |1\rangle_A |0-1\rangle_B |1\rangle_C, \\
			&|\phi_2\rangle = |2\rangle_A |0-2\rangle_B |1\rangle_C, \\
			&|\phi_3\rangle = |0-1\rangle_A |2\rangle_B |1\rangle_C, \\
			&|\phi_4\rangle = |0-2\rangle_A |1\rangle_B |1\rangle_C, \\
			&|\phi_5\rangle = |0-1\rangle_A |3\rangle_B |1\rangle_C, \\
			&|\phi_6\rangle = |0+1\rangle_A |2-3\rangle_B |1\rangle_C, \\
			&|\phi_7\rangle = |0+1+2\rangle_A |0+1+2+3\rangle_B |0+1+2+3+4\rangle_C, \\
			&|\phi_8\rangle = |1\rangle_A |0+1\rangle_B |0-1\rangle_C, \\
			&|\phi_9\rangle = |2\rangle_A |0+1\rangle_B |0-2\rangle_C, \\
			&|\phi_{10}\rangle = |0-1\rangle_A |0+1\rangle_B |2\rangle_C, \\
			&|\phi_{11}\rangle = |0-1\rangle_A |0+1\rangle_B |3\rangle_C, \\
			&|\phi_{12}\rangle = |0-1\rangle_A |0+1\rangle_B |4\rangle_C, \\
			&|\phi_{13}\rangle = |0+1\rangle_A |0+1\rangle_B \bigl(|2\rangle_C + \omega_3 |3\rangle_C + \omega_3^2 |4\rangle_C\bigr), \\
			&|\phi_{14}\rangle = |0+1\rangle_A |0+1\rangle_B \bigl(|2\rangle_C + \omega_3^2 |3\rangle_C + \omega_3 |4\rangle_C\bigr).
		\end{aligned}
	\end{equation}
	
	\begin{theorem}\label{thm:GNII1}
		The set ${\textstyle \bigcup_{i=1}^{14}} \left | \phi_{i}   \right \rangle $ in Eq.(\ref{GNII1}) can be locally distinguished by using the resource configuration $(1, |G(2)\rangle_{ABC})$.
	\end{theorem}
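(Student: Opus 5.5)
The plan is to imitate the protocol of Theorem~\ref{thm:n1}, using the GHZ state $|G(2)\rangle_{abc}$ in place of the EPR pair. The reason this should work is that the set in Eq.~(\ref{GNII1}) is the union of two pieces glued by the stopper $|\phi_7\rangle$. The states $|\phi_1\rangle,\dots,|\phi_6\rangle$, with $C$ fixed in $|1\rangle$, form a copy of the $A$--$B$ nonlocal structure of Eq.~(\ref{N2}). The states $|\phi_8\rangle,\dots,|\phi_{14}\rangle$, with $B$ fixed in $|0+1\rangle$, form a copy of the $A$--$C$ set of Eq.~(\ref{N1}). In both pieces the obstruction is the same splitting of Alice's space into $\{|0\rangle,|1\rangle\}$ versus $\{|2\rangle\}$, where $|\phi_4\rangle=|0-2\rangle_A$ and $|\phi_9\rangle=|2\rangle_A$ must be handled. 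So a single measurement by Alice, coupled to the GHZ ancilla, should serve both pieces simultaneously.

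\textit{Step 1.} Alice measures $M_{1,1}=P[(|0\rangle,|1\rangle)_A;|0\rangle_a]+P[|2\rangle_A;|1\rangle_a]$ and $M_{1,2}=I-M_{1,1}$. For outcome $M_{1,1}$, every state has its $|0\rangle_A,|1\rangle_A$ components tagged by $|000\rangle_{abc}$ and its $|2\rangle_A$ component tagged by $|111\rangle_{abc}$. After this step, $|\phi_4\rangle$, $|\phi_2\rangle$, $|\phi_9\rangle$ and $|\phi_7\rangle$ are the only states touching the $|111\rangle$ branch. Crucially, both Bob and Charlie now hold a copy of the tag, so each can separate Alice's $|2\rangle$-component by a local projection on $b$ or $c$. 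With an EPR pair only one of them could do this.

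\textit{Step 2.} Bob measures $P[(|2\rangle,|3\rangle)_B;|0\rangle_b]$ against its complement. The first outcome leaves $\{|\phi_3\rangle,|\phi_5\rangle,|\phi_6\rangle\}$ together with the projected $|\phi_7\rangle$. This subset is handled as in Theorem~\ref{thm:GNI1}, Step 2, with $C$ already fixed or spread over all of $\{|0\rangle,\dots,|4\rangle\}$, so Charlie can finish. In the complementary branch, Charlie measures $P[(|2\rangle,|3\rangle,|4\rangle)_C;|0\rangle_c]$ against its complement. This isolates $\{|\phi_{10}\rangle,\dots,|\phi_{14}\rangle\}$ plus a piece of $|\phi_7\rangle$, which Alice then splits with $P[|0-1\rangle_A;|0\rangle_a]$ exactly as in Step 3 of Theorem~\ref{thm:n1}.

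\textit{Step 3.} The remainder is $\{|\phi_1\rangle,|\phi_2\rangle,|\phi_4\rangle,|\phi_8\rangle,|\phi_9\rangle\}$ and the residue of $|\phi_7\rangle$. Alice projects onto $P[|1\rangle_A;|0\rangle_a]$, which separates $\{|\phi_1\rangle,|\phi_8\rangle\}$; these differ by $C$ being $|1\rangle$ versus $|0-1\rangle$ and by $B$. Next, Bob projects on $P[|1\rangle_B;I_b]$ and Charlie on $P[|1\rangle_C;I_c]$, in the manner of $\mathcal{M}_4'$, to separate $|\phi_4\rangle$ from $|\phi_2\rangle,|\phi_9\rangle$. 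The pair $|\phi_2\rangle,|\phi_9\rangle$ is then distinguished because $|\phi_2\rangle$ lives on $|1\rangle_C$ whereas $|\phi_9\rangle$ has $|0-2\rangle_C$. Outcome $M_{1,2}$ is treated symmetrically, and the total cost is one GHZ state, i.e.\ the configuration $(1,|G(2)\rangle_{ABC})$.

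The main obstacle is Step 2--3 bookkeeping. One must check that each projector is orthogonality preserving on \emph{all} surviving states at once. This is delicate because the two pieces overlap in supports: $|\phi_1\rangle$ and $|\phi_8\rangle$ share $|1\rangle_A$, and the $C$-supports of $|\phi_8\rangle,|\phi_9\rangle$ contain $|1\rangle_C$. It is also the point where the stopper $|\phi_7\rangle$, being spread over every basis vector, must remain orthogonal to everything after each projection. If Bob's projection in Step 2 destroys orthogonality between a state of the first piece and one of the second, I would first let Charlie separate $\{|2\rangle,|3\rangle,|4\rangle\}_C$ before Bob acts, or attach Bob's projector to $|1\rangle_b$ for the $|2\rangle_A$-branch, using the freedom that the GHZ tag gives both parties.
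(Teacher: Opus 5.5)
Your protocol is correct and is essentially the paper's. It uses the same GHZ-tagged split by Alice, then Bob's $P[(|2\rangle,|3\rangle)_B;|0\rangle_b]$, Charlie's $P[(|2\rangle,|3\rangle,|4\rangle)_C;|0\rangle_c]$, and Alice's $P[|0-1\rangle_A;|0\rangle_a]$ and $P[|1\rangle_A;|0\rangle_a]$, and all of these do preserve orthogonality on the $|\phi_7\rangle$ residues. There is one slip: in the $\{|\phi_3\rangle,|\phi_5\rangle,|\phi_6\rangle,|\phi_7\rangle\}$ branch Charlie cannot finish, since all three states carry $|1\rangle_C$. Alice's $|0\pm1\rangle$ split followed by Bob does it, as in the paper.

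The only real deviation is at the end. The paper isolates $|\phi_9\rangle$ with Charlie's $P[|0-2\rangle_C;|1\rangle_c]$, whereas your $|1\rangle_B$ and $|1\rangle_C$ projections work equally well. In both versions the last pair, $|\phi_4\rangle$ versus the $|\phi_7\rangle$ residue, has the form $|1\rangle_B|1\rangle_C\otimes\bigl(|0\rangle_A|000\rangle_{abc}\pm|2\rangle_A|111\rangle_{abc}\bigr)$. This is a GHZ-sign discrimination, settled by $\pm$-basis measurements and a parity check.
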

	
	\textbf{Proof.}	 Using the entanglement resource \(|G(2) \rangle_{abc}=|000\rangle_{abc}+|111\rangle_{abc}\), the initial state is
	\[
	\begin{aligned}
		\left|\phi\right\rangle_{ABC} \otimes \left|G(2)\right\rangle_{abc},
	\end{aligned}
	\]where $a$, $b$, and $c$ are the ancillary systems of Alice, Bob, and Charlie, respectively.
	
	\textit{Step 1.} Alice performs the measurement
	\[
	\begin{aligned}
		\mathcal{M}_1 \equiv \bigl\{
		M_{1,1}=P\bigl[(|0\rangle,|1\rangle)_A;|0\rangle_a\bigr]
		+P\bigl[|2\rangle_A;|1\rangle_a\bigr],\;
		M_{1,2}=I-M_{1,1}\bigr\}.
	\end{aligned}
	\]
	Suppose at the outcome corresponding to \(M_{1,1}\) occurs. Then
	\[
	\begin{aligned}
		&|\phi_1\rangle \to  |1\rangle_A |0-1\rangle_B |1\rangle_C|000\rangle_{abc}, \\
		&|\phi_2\rangle \to  |2\rangle_A |0-2\rangle_B |1\rangle_C|111\rangle_{abc}, \\
		&|\phi_3\rangle \to  |0-1\rangle_A |2\rangle_B |1\rangle_C|000\rangle_{abc}, \\
		&|\phi_4\rangle \to  |0\rangle_A |1\rangle_B |1\rangle_C|000\rangle_{abc}-|2\rangle_A |1\rangle_B |1\rangle_C|111\rangle_{abc}, \\
		&|\phi_5\rangle \to  |0-1\rangle_A |3\rangle_B |1\rangle_C|000\rangle_{abc}, \\
		&|\phi_6\rangle \to  |0+1\rangle_A |2-3\rangle_B |1\rangle_C|000\rangle_{abc}, \\
		&|\phi_7\rangle \to  |0+1\rangle_A |0+1+2+3\rangle_B |0+1+2+3+4\rangle_C|000\rangle_{abc}\\
		&~~~~~~~~~~~~+|2\rangle_A |0+1+2+3\rangle_B |0+1+2+3+4\rangle_C|111\rangle_{abc}, \\
		&|\phi_8\rangle \to  |1\rangle_A |0+1\rangle_B |0-1\rangle_C|000\rangle_{abc}, \\
		&|\phi_9\rangle \to  |2\rangle_A |0+1\rangle_B |0-2\rangle_C|111\rangle_{abc}, \\
		&|\phi_{10}\rangle \to  |0-1\rangle_A |0+1\rangle_B |2\rangle_C|000\rangle_{abc}, \\
		&|\phi_{11}\rangle \to  |0-1\rangle_A |0+1\rangle_B |3\rangle_C|000\rangle_{abc}, \\
		&|\phi_{12}\rangle \to  |0-1\rangle_A |0+1\rangle_B |4\rangle_C|000\rangle_{abc}, \\
		&|\phi_{13}\rangle \to  |0+1\rangle_A |0+1\rangle_B \bigl(|2\rangle_C + \omega_3 |3\rangle_C + \omega_3^2 |4\rangle_C\bigr)|000\rangle_{abc}, \\
		&|\phi_{14}\rangle \to  |0+1\rangle_A |0+1\rangle_B \bigl(|2\rangle_C + \omega_3^2 |3\rangle_C + \omega_3 |4\rangle_C\bigr)|000\rangle_{abc}.
	\end{aligned}
	\]
	
	\textit{Step 2.} Bob performs the measurement
	\[
	\begin{aligned}
		\mathcal{M}_2\equiv \bigl\{
		M_{2,1}=P[(|2\rangle,|3\rangle)_{B};|0\rangle_{b}],
		\;
		M_{2,2}=I-M_{2,1}
		\bigr\}.
	\end{aligned}
	\]
	If outcome \(M_{2,1}\) occurs, then the states are \(|\phi_{3} \rangle,~|\phi_{5} \rangle,~|\phi_{6} \rangle,\) and $\left |\phi_{7}  \right \rangle \to|0+1\rangle_A |2+3\rangle_B |0+1+2+3+4\rangle_C|000\rangle_{abc}$. Next,
	Alice performs the measurement
	\[
	\begin{aligned}
		\mathcal{M}_3 \equiv \bigl\{
		M_{3,1}=P\bigl[|0-1\rangle_A;|0\rangle_a\bigr],\;
		M_{3,2}=I-M_{3,1}\bigr\}.
	\end{aligned}
	\]If outcome \(M_{3,1}\) occurs, then the states are \(|\phi_{3} \rangle,|\phi_{5} \rangle\); if outcome \(M_{3,2}\) occurs, then the states are  \(|\phi_{6} \rangle\) and $\left |\phi_{7}  \right \rangle \to|0+1\rangle_A |2+3\rangle_B |0+1+2+3+4\rangle_C|000\rangle_{abc}$. They are all locally distinguishable. If outcome \(M_{2,2}\) occurs, we  move to the next step.
	
	\textit{Step 3.} Charlie performs the measurement
	\[
	\begin{aligned}
		\mathcal{M}_4\equiv \bigl\{
		M_{4,1}=P[(|2\rangle,|3\rangle,|4\rangle)_{C};|0\rangle_{c}],
		\;
		M_{4,2}=I-M_{4,1}
		\bigr\}.
	\end{aligned}
	\]
	If outcome \(M_{4,1}\) occurs, then the state is one of \(\{|\phi_{10} \rangle,|\phi_{11} \rangle,|\phi_{12} \rangle,|\phi_{13} \rangle,|\phi_{14} \rangle, \left |\phi_{7}   \right \rangle \to|0+1\rangle_A |0+1\rangle_B |2+3+4\rangle_C|000\rangle_{abc}\}\). Then, Alice again performs\[
	\begin{aligned}
		\mathcal{M}_5 \equiv \bigl\{
		M_{5,1}=P\bigl[|0-1\rangle_A;|0\rangle_a\bigr],\;
		M_{5,2}=I-M_{5,1}\bigr\}.
	\end{aligned}
	\]If outcome \(M_{5,1}\) occurs, the state is one of \(\{|\phi_{10} \rangle,|\phi_{11} \rangle,|\phi_{12} \rangle\}\), which can be perfectly distinguished by measuring subsystem $C$. If outcome \(M_{5,2}\) occurs, then the remaining possibilities are  \(|\phi_{13} \rangle,~|\phi_{14} \rangle,\) and $\left |\phi_{7}  \right \rangle\to|0+1\rangle_A |0+1\rangle_B |2+3+4\rangle_C|000\rangle_{abc}$, which are locally distinguishable.
	
	\textit{Step 4.} Alice then performs the measurement
	\[
	\begin{aligned}
		\mathcal{M}_6\equiv \bigl\{
		M_{6,1}=P[|1\rangle_A;|0\rangle_a],
		\;
		M_{6,2}=I-M_{6,1}
		\bigr\}.
	\end{aligned}
	\]
	If outcome \(M_{6,1}\) occurs, the state is one of \(\{|\phi_{1} \rangle,|\phi_{8} \rangle,\left |\phi_{7}  \right  \rangle \to|1\rangle_{A}|0+1\rangle_{B}|0+1\rangle_C|000\rangle_{abc}\}\), which is locally distinguishable. Otherwise, we proceed to the next step.
	
	\textit{Step 5.} Charlie performs the measurement
	\[
	\begin{aligned}
		\mathcal{M}_7\equiv \bigl\{
		M_{7,1}=P[|0-2\rangle_C;|1\rangle_c],
		\;
		M_{7,2}=I-M_{7,1}
		\bigr\}.
	\end{aligned}
	\]
	If outcome \(M_{7,1}\) occurs, then the state is  \(|\phi_{9} \rangle\).
	If outcome \(M_{7,2}\) occurs, then the state is one of  \(\{|\phi_{2} \rangle,|\phi_{4} \rangle, \left |\phi_{7}  \right  \rangle \to|0\rangle_A |0+1\rangle_B |0+1\rangle_C|000\rangle_{abc}+|2\rangle_A |0+1+2+3\rangle_B |0+1+2+3+4\rangle_C|111\rangle_{abc}\}\), which is locally distinguishable.
	
	If other outcome occurs in Step 1, similar protocols can be constructed to perfectly distinguish the corresponding states  by LOCC.
	\qed
	
	The genuinely nonlocal set of type II is locally irreducible, but it bears structural similarities to the previously discussed sets. For the states in the subset $\{|\phi_i\rangle\}_{i=1}^{7}$ given by Eq. (\ref{GNII1}), the reduced density operators on $A$ and $B$ exhibit the same structural feature as the set in Eq. (\ref{N2}) of bipartite system, and the states $|\phi_i\rangle$ ($i=1,2,\ldots,6$) on Charlie's subsystem is fixed as $|1\rangle$. For the states in the subset $\{|\phi_4\rangle\}\bigcup\{|\phi_i\rangle\}_{i=7}^{14}$ (\ref{GNII1}), the same bipartite structural form appears between Alice and Charlie, whereas the states $|\phi_i\rangle$ ($i=8,9,\ldots,14$) on Bob's subsystem is $|0+1\rangle$. These states are locally indistinguishable. Moreover, only one EPR pair is insufficient to complete the locally distinguishable task of set in (\ref{GNII1}). Therefore, in this protocol, we let the subsystems share the GHZ state and find that a single copy is enough.
	
	In $\mathbb{C}^{d_1} \otimes \mathbb{C}^{d_2} \otimes \mathbb{C}^{d_3}$ ($3 \leq d_1 \leq d_2 \leq d_3$), the genuinely nonlocal set of type II in Ref~\cite{Lu2024GenuinelyNS} is
	\begin{equation}\label{GNII2}
		\begin{aligned}
			& |\phi_i\rangle = |i\rangle_A |0-i\rangle_B |1\rangle_C, \quad 1 \leq i \leq d_1-1, \\
			& |\phi_{i+d_1-1}\rangle = |0-i\rangle_A |j\rangle_B |1\rangle_C, \quad 1 \leq i \leq d_1-2, \ j = i+1; \ i = d_1-1, \ j=1, \\
			& |\phi_{m+d_1-1}\rangle = |0-1\rangle_A |m\rangle_B |1\rangle_C, \quad d_1 \leq m \leq d_2-1, \\
			& |\phi_{s_1+d_1+d_2-2}\rangle = |0+1\rangle_A \left( |2\rangle_B + \sum_{t_1=1}^{d_2-d_1} \omega_{d_2-d_1+1}^{s_1 t_1} |t_1+d_1-1\rangle_B \right) |1\rangle_C, \quad 1 \leq s_1 \leq d_2-d_1, \\
			& |\phi_{2d_2-1}\rangle = |0+1+\cdots+(d_1-1)\rangle_A |0+1+\cdots+(d_2-1)\rangle_B|0+1+\cdots+(d_3-1)\rangle_C, \\
			& |\phi_{i+2d_2-1}\rangle = |i\rangle_A |0+1\rangle_B |0-i\rangle_C, \quad 1 \leq i \leq d_1-1, \\
			& |\phi_{i+d_1+2d_2-2}\rangle = |0-i\rangle_A |0+1\rangle_B |j\rangle_C, \quad 1 \leq i \leq d_1-2, \ j=i+1, \\
			& |\phi_{n+d_1+2d_2-3}\rangle = |0-1\rangle_A |0+1\rangle_B |n\rangle_C, \quad d_1 \leq n \leq d_3-1, \\
			& |\phi_{s_2+d_1+2d_2+d_3-4}\rangle = |0+1\rangle_A |0+1\rangle_B \left( |2\rangle_C + \sum_{t_2=1}^{d_3-d_1} \omega_{d_3-d_1+1}^{s_2 t_2} |t_2+d_1-1\rangle_C \right), \quad 1 \leq s_2 \leq d_3-d_1.
		\end{aligned}
	\end{equation}
	
	\begin{theorem}\label{thm:GNII2}
		The set in Eq.(\ref{GNII2}) can be locally distinguished by using the resource configuration
		$(1, |G(3)\rangle_{ABC})$.
	\end{theorem}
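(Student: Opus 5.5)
We mimic the proof of Theorem~\ref{thm:GNII1}, now using the three-level GHZ state $|G(3)\rangle_{abc}=|000\rangle_{abc}+|111\rangle_{abc}+|222\rangle_{abc}$. The initial state is $|\phi\rangle_{ABC}\otimes|G(3)\rangle_{abc}$. The set in Eq.~(\ref{GNII2}) splits into two blocks. The first block is $\{|\phi_i\rangle\}_{i=1}^{2d_2-2}$; there $C$ is fixed at $|1\rangle$, and the $A$--$B$ part is exactly the bipartite set of Eq.~(\ref{N2}). The second block is $\{|\phi_i\rangle\}_{i=2d_2}^{2d_2+d_1+d_3-3}$; there $B$ is fixed at $|0+1\rangle$, and the $A$--$C$ part has the same structure. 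The state $|\phi_{2d_2-1}\rangle$ plays the role of the ``stopper'' $|S\rangle$ in both blocks. Since the ancillary GHZ state is shared by all three parties, a single ancilla can play the role of the $|\phi^+(3)\rangle$ pair used in Theorem~\ref{thm:n2} for either block. That is why one copy should suffice even though no local reduction separates the two blocks.

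\textit{Step 1 (Alice teleport-like splitting).} Alice first performs a three-outcome measurement coupling $A$ with $a$, patterned on the first measurement in the proof of Theorem~\ref{thm:n2} (Appendix~\ref{A}). For example, she projects $\{|0\rangle,|1\rangle\}_A$ onto $|0\rangle_a$, a block of the indices $\{2,\dots\}$ onto $|1\rangle_a$, and the remaining indices onto $|2\rangle_a$, with the other outcomes obtained by cyclically permuting the ancilla labels. After this, every state whose $A$-part is $|0\pm i\rangle$ with $i\ge2$ becomes $|0\rangle_A|kkk\rangle_{abc}\pm|i\rangle_A|k'k'k'\rangle_{abc}$. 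Bob's and Charlie's ancillas then carry the information about which ``arm'' of $A$ is occupied. This is what lets Bob and Charlie later make orthogonality-preserving measurements that depend on $A$'s index, as in Steps~2--5 of Theorem~\ref{thm:GNII1}.

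\textit{Steps 2--5 (peeling).} Bob measures the projector onto $\{|2\rangle,\dots,|d_2-1\rangle\}_B\otimes|0\rangle_b$. This isolates the states $|\phi_{m+d_1-1}\rangle$ ($m\ge2$), the $|\eta\rangle_B$ states and $|\phi_3\rangle$-type states, together with a residue of $|\phi_{2d_2-1}\rangle$. The isolated group is then split by Alice with $P[|0-1\rangle_A;|0\rangle_a]$ and finished by local Fourier-basis measurements. Charlie does the same with $\{|2\rangle,\dots,|d_3-1\rangle\}_C\otimes|0\rangle_c$ on the second block. Finally, Alice and Charlie, or Alice and Bob, successively peel off $|\phi_i\rangle$, $|\phi_{i+2d_2-1}\rangle$ and $|\phi_{i+d_1-1}\rangle$ one index $i$ at a time, using projectors of the form $P[|i\rangle_A;|k\rangle_a]$, $P[|i\rangle_B;I_b]$ and $P[|0-i\rangle_C;|k\rangle_c]$. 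These mirror $\mathcal{M}_4,\mathcal{M}_4'$ of Theorem~\ref{thm:n1} and $\mathcal{M}_6,\mathcal{M}_7$ of Theorem~\ref{thm:GNII1}. At each stage I would check that the post-measurement states remain orthogonal, and that any leftover set of two or three states, always including a residue of $|\phi_{2d_2-1}\rangle$, is LOCC distinguishable, which is the general remark after Theorem~\ref{thm:n1}. The other outcomes of Step~1 are handled symmetrically by relabelling the ancilla levels.

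\textbf{Main obstacle.} The crux is Step~1: the cyclic pairs $|0-i\rangle_A|i+1\rangle_B$ (with $|0-(d_1-1)\rangle_A|1\rangle_B$ closing the cycle) must all be ``opened'' by a single three-level ancilla, simultaneously for the $A$--$B$ cycle and the $A$--$C$ cycle. A qubit ancilla cannot do this for general $d_1$, as the paper remarks after Theorem~\ref{thm:n2}. I would first try the same index partition of $A$ that works in Appendix~\ref{A} for $|\phi^+(3)\rangle$. Because Alice's partition is common to both blocks, the identical argument should transfer to $C$, with the GHZ correlation $|kkk\rangle$ serving both $b$ and $c$ at once. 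If some index collision arises, I would instead refine the partition so that, for each $i$, the indices $i$ and $0$ carry different ancilla labels.
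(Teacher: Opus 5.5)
\textbf{There is a genuine gap: the one concrete measurement you specify after Step~1 fails for $d_1\ge4$.} Your overall strategy is the paper's (Appendix~\ref{D}). You use a single $|G(3)\rangle_{abc}$. Alice's first measurement maps the partition $\{0,1\}\,|\,\{2\}\,|\,\{3,\dots,d_1-1\}$ of her basis onto the ancilla levels, as in Appendix~\ref{A}. Bob and Charlie then use ancilla-conditioned projectors, and the GHZ correlation $|kkk\rangle$ lets $b$ and $c$ both read Alice's label. Your completion of Alice's first measurement by cyclic shifts of the ancilla labels is a valid measurement. The paper's written $M_{1,1}$ and $M_{1,2}$ both contain $P[(|3\rangle,\dots,|d_1-1\rangle)_A;|2\rangle_a]$, so it is not.

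The failure is with your Bob and Charlie projectors in Step~2. Take the cyclic state $|\phi_{d_1+1}\rangle=|0-2\rangle_A|3\rangle_B|1\rangle_C$. After Step~1 it is $|0\rangle_A|3\rangle_B|1\rangle_C|000\rangle_{abc}-|2\rangle_A|3\rangle_B|1\rangle_C|111\rangle_{abc}$. Your $P[(|2\rangle,\dots,|d_2-1\rangle)_B;|0\rangle_b]$ maps it to $|0\rangle_A|3\rangle_B|1\rangle_C|000\rangle_{abc}$. The same projector maps $|\phi_{2d_2-1}\rangle$ to $|0+1\rangle_A|2+3+\cdots+(d_2-1)\rangle_B|0+1+\cdots+(d_3-1)\rangle_C|000\rangle_{abc}$. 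The overlap of these two is $1\neq0$, so this is not an OPLM and that branch cannot be completed. The same failure occurs for Charlie's $P[(|2\rangle,\dots,|d_3-1\rangle)_C;|0\rangle_c]$ with $|\phi_{d_1+2d_2}\rangle=|0-2\rangle_A|0+1\rangle_B|3\rangle_C$. You carried over the $d_1=3$ pattern of Theorem~\ref{thm:GNII1}, where the range $3,\dots,d_1-1$ is empty.

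\textbf{The fix is to exclude exactly the cyclic indices,} as the paper does (following Appendix~\ref{A}). Bob uses $P[(|2\rangle,|d_1\rangle,\dots,|d_2-1\rangle)_B;|0\rangle_b]$ and Charlie uses $P[(|2\rangle,|d_1\rangle,\dots,|d_3-1\rangle)_C;|0\rangle_c]$. The captured states are then the $|0-1\rangle_A$ ones, orthogonal to the residue of $|\phi_{2d_2-1}\rangle$ through $|0+1\rangle_A$. The Fourier states are orthogonal to it through the vanishing sums of roots of unity. The cyclic states are left whole, and are removed in a definite order:
\begin{enumerate}
\item Alice applies $P[|1\rangle_A;|0\rangle_a]$, leaving $|\phi_1\rangle,|\phi_{2d_2}\rangle$, which are separated on $B$.
\item Bob applies $P[|3\rangle_B;(|0\rangle,|1\rangle)_b]$ and Charlie applies $P[|3\rangle_C;(|0\rangle,|1\rangle)_c]$.
\item Alice applies $P[|2\rangle_A;|1\rangle_a]$, leaving $|\phi_2\rangle,|\phi_{2d_2+1}\rangle$, which are separated on $C$.
\item Charlie applies $P[|1\rangle_C;(|0\rangle,|2\rangle)_c]$. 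This splits the two blocks, so each reduces to the label-$2$ stage of Appendix~\ref{A}.
\end{enumerate}
Your peeling paragraph leaves this last block-splitting step implicit, and it must be done on $C$. The reason is that $|\phi_i\rangle$ and $|\phi_{i+2d_2-1}\rangle$ share Alice's state and ancilla label, while $\langle 0-i|0+1\rangle\neq0$ on $B$.
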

	
	Since set (\ref{GNII2}) is a higher-dimensional generalization of the previous set and shares the same structural features, we establish similar method in this protocol. The detailed procedure is given in Appendix~\ref{D}. For the scheme in Theorem \ref{thm:GNII2}, we use a total of one maximally entangled states in three-qutrit quantum system.
	
	\subsection{ Genuinely nonlocal sets in $\otimes_{i=1}^{n}\mathbb{C}^{d_i}$ ($3 \leq d_1 \leq d_2 \leq \dots \leq d_n, n \geq 4$)}
	In $\otimes_{i=1}^{n}\mathbb{C}^{d_i}$ ($3 \leq d_1 \leq d_2 \leq \dots \leq d_n, n \geq 4$),
	the genuinely nonlocal set of type II~\cite{Lu2024GenuinelyNS} is
	\begin{equation}\label{GNII3}
		\begin{aligned}
			&G_1 = \bigl\{|x_{j_1}\rangle_{A_1}|y_{j_1}\rangle_{A_2}|0\rangle_{A_3}|0\rangle_{A_4}\cdots |0\rangle_{A_{n-3}}|0\rangle_{A_{n-2}}|0\rangle_{A_{n-1}}|1\rangle_{A_n}\bigr\}, \\
			&G_2 = \bigl\{|1\rangle_{A_1}|x_{j_2}\rangle_{A_2}|z_{j_2}\rangle_{A_3}|0\rangle_{A_4}\cdots |0\rangle_{A_{n-3}}|0\rangle_{A_{n-2}}|0\rangle_{A_{n-1}}|0\rangle_{A_n}\bigr\}, \\
			&G_3 = \bigl\{|0\rangle_{A_1}|x_{j_3}\rangle_{A_2}|1\rangle_{A_3}|z_{j_3}\rangle_{A_4}\cdots |0\rangle_{A_{n-3}}|0\rangle_{A_{n-2}}|0\rangle_{A_{n-1}}|0\rangle_{A_n}\bigr\}, \\
			&~~\vdots \\
			&G_{n-2} = \bigl\{|0\rangle_{A_1}|x_{j_{n-2}}\rangle_{A_2}|0\rangle_{A_3}|0\rangle_{A_4}\cdots |0\rangle_{A_{n-3}}|1\rangle_{A_{n-2}}|z_{j_{n-2}}\rangle_{A_{n-1}}|0\rangle_{A_n}\bigr\}, \\
			&G_{n-1} = \bigl\{|0\rangle_{A_1}|x_{j_{n-1}}\rangle_{A_2}|0\rangle_{A_3}|0\rangle_{A_4}\cdots |0\rangle_{A_{n-3}}|0\rangle_{A_{n-2}}|1\rangle_{A_{n-1}}|z_{j_{n-1}}\rangle_{A_n}\bigr\},
		\end{aligned}
	\end{equation}
	where $\{|x_{j_1}\rangle|y_{j_1}\rangle, m_1 = 1, 2, \dots, 2d_2-1\}$ are $2d_2-1$ locally indistinguishable orthogonal product states in $\mathbb{C}^{d_1} \otimes \mathbb{C}^{d_2}$. And $\{|x_{j_{k-1}}\rangle|z_{j_{k-1}}\rangle, j_{k-1} = 1, 2, \dots, 2d_k-1\}$ are $2d_k-1$ locally indistinguishable orthogonal product states in $\mathbb{C}^{d_2} \otimes \mathbb{C}^{d_k}$ ($k = 3, \dots, n$).
	
	\begin{theorem}\label{thm:GNII3}
		The genuinely nonlocal set $\bigcup_{i=1}^{n-1} G_i$ given by Eq. (\ref{GNII3}) can be locally distinguished by using the resource configuration
		\(\{(1,\vert \phi^{+}(2)\rangle_{A_{n-1}A_n});(\frac{r_1}{s},\vert \phi^{+}(3)\rangle_{A_1A_2});(\frac{r_2}{s},\vert \phi^{+}(3)\rangle_{A_2A_3});(\frac{r_3}{s},\vert \phi^{+}(3)\rangle_{A_2A_4});\cdots;(\frac{r_{n-1}}{s},\vert \phi^{+}(3)\rangle_{A_2A_n})\}\),  where \[
		\begin{aligned}
			s= \sum_{k=2}^{n} \left ( 2d_k-1 \right )  ,\;
			r_i=2 d_{i+1}-1.
		\end{aligned}
		\]
	\end{theorem}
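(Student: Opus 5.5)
The type II set in Eq.~(\ref{GNII3}) differs from the type I set in Eq.~(\ref{GNI3}) only in $G_{n-1}$: its $A_1$ component is $|0\rangle$ rather than $|d_1-1\rangle$. This is what makes the set locally irreducible. I would therefore first spend the extra EPR pair $|\phi^{+}(2)\rangle_{a_{n-1}a_n}$ shared by $A_{n-1}$ and $A_n$ to restore reducibility, and then run the protocol of Theorem~\ref{thm:GNI3} unchanged.

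\textit{Separating the subsets.} Look at the markers on the ``spectator'' parties. On $A_n$, the states of $G_1$ carry $|1\rangle$, the states of $G_2,\dots,G_{n-2}$ carry $|0\rangle$, and $G_{n-1}$ carries $|z_{j_{n-1}}\rangle$, which has support on all of $\mathbb{C}^{d_n}$. Similarly $A_{n-1}$ carries $|1\rangle$ for $G_{n-1}$, $|0\rangle$ for $G_1,\dots,G_{n-3}$, and $|z\rangle$ for $G_{n-2}$.

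The idea is to use the EPR pair so that $A_{n-1}$ and $A_n$ jointly learn whether $A_{n-1}$ is in $|1\rangle$ without disturbing the superpositions.
\begin{itemize}
\item[(i)] $A_{n-1}$ performs $\{P[|1\rangle_{A_{n-1}};|1\rangle_{a_{n-1}}]+P[(I-|1\rangle\langle 1|)_{A_{n-1}};|0\rangle_{a_{n-1}}],\ I-\cdot\}$. This correlates the ancilla with the $|1\rangle$ component of $A_{n-1}$, as in Step~1 of Theorem~\ref{thm:GNII1}.
\item[(ii)] $A_n$ can then apply, conditioned on its ancilla $a_n$, a projection that separates $|1\rangle_{A_n}$ (the marker of $G_1$) from the rest on the branch $|0\rangle_{a_n}$.
\item[(iii)] Afterwards $G_1$ and $G_{n-1}$ are isolated from the block $G_2\cup\dots\cup G_{n-2}$.
\end{itemize}
For the superposed states whose $A_{n-1}$ or $A_n$ factor is a $|z\rangle$, I would check that after each outcome they split into orthogonal branches that stay within a single $G_i$, exactly as happened with $|\phi_4\rangle$ and $|\phi_7\rangle$ in Theorem~\ref{thm:GNII1}.

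Once $G_{n-1}$ has been split off by this step, the remaining set $\bigcup_{i=1}^{n-2}G_i$ coincides with the corresponding part of Eq.~(\ref{GNI3}) and is locally reducible. So $A_3,\dots,A_{n-2}$ can use their $|1\rangle$ markers to identify $G_2,\dots,G_{n-2}$ by nontrivial OPLMs with no further entanglement, as in Appendix~\ref{C}.

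\textit{Within each subset.} Once the subset $G_i$ is known, it has the same bipartite structure as the set in Eq.~(\ref{N2}) (between $A_1A_2$, or between $A_2A_{i+1}$) with all other parties in fixed product states. Theorem~\ref{thm:n2} then distinguishes it using one $|\phi^{+}(3)\rangle$ on the corresponding pair. Since $G_i$ has $r_i=2d_{i+1}-1$ states out of $s$ in total, the average consumption is exactly the stated configuration. The $|\phi^{+}(2)\rangle_{A_{n-1}A_n}$ is always consumed, which accounts for its weight $1$.

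\textit{Main obstacle.} The delicate point is the first stage. The stopper-like states, which are the analogues of $|S\rangle$ (the all-superposition states with full support), and the $|z_{j_{n-1}}\rangle$, $|z_{j_{n-2}}\rangle$ factors have nonzero overlap with both markers $|0\rangle$ and $|1\rangle$. I must verify that every operator in the ancilla-assisted measurement keeps all post-measurement states pairwise orthogonal, and that each resulting branch of such a state lies entirely inside one $G_i$.

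If a single projection cannot achieve this, I would first try a two-round measurement: $A_{n-1}$ measures first, then $A_n$ measures conditioned on the shared ancilla value. This mirrors Steps~1--3 of Theorem~\ref{thm:GNII1}, and I would verify orthogonality branch by branch, as done there.
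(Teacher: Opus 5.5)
Your route is the paper's proof (Appendix E) with the roles of $A_{n-1}$ and $A_n$ interchanged. In the paper, $A_n$ copies its $|1\rangle$ marker onto the shared pair and $A_{n-1}$ then projects onto $|0\rangle_{A_{n-1}}\otimes|1\rangle_{a_{n-1}}$. You let $A_{n-1}$ copy its $|1\rangle$ marker and $A_n$ project onto $|1\rangle_{A_n}\otimes|0\rangle_{a_n}$. Either way that outcome contains exactly $G_1$, and every other state falls entirely into the complementary outcome; orthogonality is automatic because the copying step acts as an isometry. The resource count is the same as the paper's.

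\textbf{The error.} Claim (iii), and the paragraph built on it, are wrong. Steps (i)--(ii) isolate $G_1$ only. $G_{n-1}$ is not split off, and it cannot be split off through its ancilla tag:
\begin{itemize}
\item After (i), the $|1\rangle_{A_{n-1}}$ components of the $G_{n-2}$ states carry the same tag $|11\rangle_{a_{n-1}a_n}$ as $G_{n-1}$.
\item So projecting onto $|1\rangle_{a_n}$, or reading out $a_n$ (one possible meaning of ``conditioned on its ancilla''), cuts the $G_{n-2}$ states.
\item Concretely, the $|1\rangle|0-1\rangle$-type state and the stopper of $G_{n-2}$ become $-|1\rangle_{A_2}|1\rangle_{A_{n-1}}$ and $|0+1+\cdots\rangle_{A_2}|1\rangle_{A_{n-1}}$, with all other factors equal. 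These are not orthogonal.
\end{itemize}

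\textbf{The fix.} Step (ii) must be the single two-outcome measurement $\{|1\rangle\langle 1|_{A_n}\otimes|0\rangle\langle 0|_{a_n},\ I-\cdot\}$. What remains afterwards is $G_2\cup\cdots\cup G_{n-1}$, not $\bigcup_{i=1}^{n-2}G_i$. The peeling must start with $A_1$, whose $|1\rangle$ marker of $G_2$ becomes usable once $G_1$ (the only subset with full support on $A_1$) is gone. It then continues with $A_3,\dots,A_{n-2}$ and leaves $G_{n-1}$ last. This is exactly Step 3 of Appendix E, and with this correction your argument goes through.

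\textbf{A remaining detail.} Neither you nor the paper writes this out. In your ordering the $G_{n-2}$ states (in the paper's ordering, the $G_{n-1}$ states) stay entangled with the pair through the $|1\rangle$ component of their $z$-factor. Undo this before invoking Theorem~\ref{thm:n2}: $A_n$ measures $a_n$ in the basis $\{|0\rangle\pm|1\rangle\}$, and $A_{n-1}$ applies the corresponding controlled correction on $A_{n-1}a_{n-1}$.
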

	
	Unlike set (\ref{GNI3}), due to its local irreducibility, the genuinely nonlocal set in (\ref{GNII3}) is impossible to locally distinguish each subset \(G_i\) (\( i = 1,2,\ldots,n-1 \)). So we let $A_{}n-1$ and $A_n$ share an EPR state. Then we can identify all the $G_i$ by LOCC. After that, by applying steps similar to those in Appendix~\ref{A}, we can perfectly discriminate the states in each subset \(G_i\) (\( i = 1,2,\ldots,n-1 \). The detailed procedure is given in Appendix~\ref{E}. Compared with the discrimination protocol of genuinely nonlocal sets (\ref{GNI3}) of type I, the protocol in Theorem \ref{thm:GNII3} consumes one more ebit of entanglement resource.

\section{CONCLUSION}\label{Q6}

We have studied entanglement-assisted local discrimination for two classes of genuinely nonlocal sets: type I (locally reducible) and type II (locally irreducible) \cite{Lu2024GenuinelyNS}. These sets are constructed based on nonlocal sets in bipartite systems. We proposed efficient local discrimination schemes for various orthogonal product sets. Among the eight presented schemes, only the final one utilizes two maximally entangled states, while the remaining seven require only one. In $\mathbb{C}^{d_1} \otimes \mathbb{C}^{d_2}$ ($3 \le d_1 \le d_2$) systems, the nonlocal sets can be perfectly discriminated using at most one $|\phi^+(3)\rangle$ state; when $d_1$ and $d_2$ are relatively small, an EPR state suffices.

Building on these bipartite conclusions, we provided locally distinguishable protocols for genuinely nonlocal sets in multipartite systems. For type I sets, the results align with their bipartite counterparts. For type II sets, one GHZ state is sufficient for low-dimensional tripartite systems, while a maximally entangled state $|G(3)\rangle = \frac{1}{\sqrt{3}}(|000\rangle + |111\rangle + |222\rangle)$ is required for higher-dimensional three-qutrit systems. In $n$-partite systems, the discrimination protocol for type II sets requires only one additional EPR state compared to type I sets. Overall, our findings demonstrate how auxiliary subsystems can be effectively exploited to design efficient entanglement-assisted discrimination protocols for genuinely nonlocal sets, providing favorable theoretical support for quantum information processing.

\begin{acknowledgments}
	This work was supported by the National Natural Science Foundation of China (Grant No.~12526564), the Hebei Natural Science Foundation (Grant No.~A2025403008), and the Doctoral Science Start Foundation of Hebei GEO University (Grant No.~BQ2024075).
\end{acknowledgments}
\begin{appendix}
\section{The proof of Theorem 2}\label{A}
		Let Alice and Bob share a maximally entangled state \(\vert \phi^{+}(3)\rangle_{ab}\). The initial state is
		\[
		\begin{aligned}
			\left|\phi\right\rangle_{AB} \otimes \left|\phi^{+}(3)\right\rangle_{ab},
		\end{aligned}
		\]
		where $a$ and $b$ are the ancillary systems of Alice and Bob, respectively. Because each of subsets $\{|\phi_i\rangle\}$, $\{|\phi_{i+(d_1-1)}\rangle\}$, $\{|\phi_{j+(d_1-1)}\rangle\}$, and $\{|\phi_{s+(d_1+d_2-2)}\rangle\}$ is LOCC distinguishable, we can also locally distinguish these subsets. The protocol proceeds as follows.
		
		\textbf{Step 1.} Alice performs the measurement
		\[
		\begin{aligned}
			\mathcal{M}_1 \equiv \bigl\{
			&M_{1,1}=P\bigl[(|0\rangle,|1\rangle)_A;|0\rangle_a\bigr]
			+P\bigl[|2\rangle_A;|1\rangle_a\bigr]+P\bigl[(|3\rangle,|4\rangle,\dots,|d_1-1\rangle)_A;|2\rangle_a\bigr], \\
			\;
			&M_{1,2}=P\bigl[(|0\rangle,|1\rangle)_A;|1\rangle_a\bigr]
			+P\bigl[|2\rangle_A;|0\rangle_a\bigr]+P\bigl[(|3\rangle,|4\rangle,\dots,|d_1-1\rangle)_A;|2\rangle_a\bigr], \\
			\;
			&M_{1,3}=I-M_{1,1}-M_{1,2}\bigr\}.
		\end{aligned}
		\]
		Suppose that the outcome corresponding to \(M_{1,1}\) occurs. When $d_1$ and $d_2$ are sufficiently large, i.e. $d_1\geq 5$ and $d_2>d_1$, we have
		\[
		\begin{aligned}
			&|\phi_1\rangle \to  |1\rangle_A |0-1\rangle_B|00\rangle_{ab},  \\
			&|\phi_2\rangle \to  |2\rangle_A |0-2\rangle_B|11\rangle_{ab},  \\
			&|\phi_i\rangle \to  |i\rangle_A |0-i\rangle_B|22\rangle_{ab}, \quad  i =3,4,\dots, d_1-1, \\
			&|\phi_{1+(d_1-1)}\rangle \to  |0-1\rangle_A |2\rangle_B|00\rangle_{ab} ,\\
			&|\phi_{2+(d_1-1)}\rangle \to  |0\rangle_A |3\rangle_B|00\rangle_{ab}-|2\rangle_A |3\rangle_B|11\rangle_{ab}, \\
			&|\phi_{i+(d_1-1)}\rangle \to  |0\rangle_A |m\rangle_B|00\rangle_{ab}-|i\rangle_A |m\rangle_B|22\rangle_{ab}, \quad  i =3,4,\dots,d_1-2, \ m = i+1, \\
			&|\phi_{(d_1-1)+(d_1-1)}\rangle \to  |0\rangle_A |1\rangle_B|00\rangle_{ab}-|d_1-1\rangle_A |1\rangle_B|22\rangle_{ab}, \\
			&|\phi_{j+(d_1-1)}\rangle \to  |0-1\rangle_A |j\rangle_B|00\rangle_{ab}, \quad  j =d_1,\dots, d_2-1, \\
			&|\phi_{s+(d_1+d_2-2)}\rangle \to |0+1\rangle_A \Bigl(|2\rangle_B + \sum_{t=1}^{d_2-d_1} \omega_{d_2-d_1+1}^{st} |t+d_1-1\rangle_B \Bigr)|00\rangle_{ab},
			\quad  s =1,\dots, d_2-d_1.\\
		\end{aligned}
		\]

		\textbf{Step 2.} Bob performs the measurement
		\[
		\begin{aligned}
			\mathcal{M}_2\equiv \bigl\{
			M_{2,1}=P[(|2\rangle,|d_1\rangle,\dots,|d_2-1\rangle)_B;|0\rangle_{b}],
			\;
			M_{2,2}=I-M_{2,1}
			\bigr\}.
		\end{aligned}
		\]
		If outcome \(M_{2,1}\) occurs, then the subset is one of  \(\{|\phi_{1+(d_1-1)}\rangle\}\), \(\{|\phi_{j+(d_1-1)}\rangle\}\), and \(\{|\phi_{s+(d_1+d_2-2)}\rangle\}\). They are locally distinguishable.
		
		\textbf{Step 3.} Alice performs the measurement
		\[
		\begin{aligned}
			\mathcal{M}_3\equiv \bigl\{
			M_{3,1}=P[|1\rangle_A;|0\rangle_{a}],	\; M_{3,2}=I-M_{3,1}
			\bigr\}.
		\end{aligned}
		\]
		If outcome \(M_{3,1}\) occurs, then the subset is \(|\phi_{1}\rangle\). Otherwise, the state belongs to the set of remaining states \(\{|\phi_{i}\rangle,|\phi_{i+(d_1-1)}\rangle\}_{i=2}^{d_1-1}\).
		
		\textbf{Step 4.} Bob performs the measurement
		\[
		\begin{aligned}
			\mathcal{M}_4\equiv \bigl\{
			&M_{4,1}=P[|1\rangle_B;(|0\rangle,|2\rangle)_{b}],\\	
			&M_{4,2}=P[(|0\rangle,|2\rangle)_B;|1\rangle_{b}], \\
			\;
			&M_{4,3}=P[|3\rangle_B;(|0\rangle,|1\rangle)_{b}], \\
			\;
			&M_{4,4}=I-M_{4,1}-M_{4,2}-M_{4,3}
			\bigr\}.
		\end{aligned}
		\]
		If outcome \(M_{4,1}\) occurs, the state is \(|\phi_{(d_1-1)+(d_1-1)}\rangle \); if outcome \(M_{4,2}\) occurs, the state is \(|\phi_{2}\rangle\); if outcome \(M_{4,3}\) occurs, the state is \(|\phi_{2+(d_1-1)}\rangle\). Otherwise, the state belongs to the set of remaining states \(\{|\phi_{i}\rangle\}_{i=3}^{d_1-1}\bigcup\{|\phi_{i+(d_1-1)}\rangle\}_{i=3}^{d_1-2}\).
		
		\textbf{Step 5.} Alice performs the measurement
		\[
		\begin{aligned}
			\mathcal{M}_5\equiv \bigl\{
			M_{5,1}=P[|d_1-1\rangle_A;|2\rangle_{a}],	\; M_{5,2}=I-M_{5,1}
			\bigr\}.
		\end{aligned}
		\]
		If outcome \(M_{5,1}\) occurs, then the state is \(|\phi_{d_1-1}\rangle \). Otherwise, Bob performs the measurement
		\[
		\begin{aligned}
			\mathcal{M}_6\equiv \bigl\{
			M_{6,1}=P[|d_1-1\rangle_B;(|0\rangle,|2\rangle)_{b}],	\; M_{6,2}=I-M_{6,1}
			\bigr\}.
		\end{aligned}
		\]
		If outcome \(M_{6,1}\) occurs, then the state is \(|\phi_{(d_1-2)+(d_1-1)}\rangle \). Operate alternately in this way. If outcome \(M_{6,2}\) occurs, Alice performs the measurement
		\[
		\begin{aligned}
			\mathcal{M}_7\equiv \bigl\{
			M_{7,1}=P[|d_1-2\rangle_A;|2\rangle_{a}],	\; M_{7,2}=I-M_{7,1}
			\bigr\}.
		\end{aligned}
		\]
		If outcome \(M_{7,1}\) occurs, then the state is \(|\phi_{d_1-2}\rangle \). Otherwise, Bob performs the measurement
		\[
		\begin{aligned}
			\mathcal{M}_8\equiv \bigl\{
			M_{8,1}=P[|d_1-2\rangle_B;(|0\rangle,|2\rangle)_{b}],	\; M_{8,2}=I-M_{8,1}
			\bigr\}.
		\end{aligned}
		\]
		If outcome \(M_{8,1}\) occurs, then the state is \(|\phi_{(d_1-3)+(d_1-1)}\rangle \).
		
		By the same reasoning, the sets \(\{|\phi_{i}\rangle\}_{i=3}^{d_1-1}\bigcup\{|\phi_{i+(d_1-1)}\rangle\}_{i=3}^{d_1-2}\) can be perfectly distinguished.
		
		In Step 1, when $d_1$ and $d_2$ are relatively small, the number of quantum states in set (\ref{N2}) decreases accordingly, while the structure of the set remains unchanged. A maximally entangled state suffices to perfectly discriminate the larger set, and it can also fully distinguish the smaller one. In fact, when $d_1$ and $d_2$ are relatively small, an EPR state is adequate for the locally distinguishable task, and Theorem \ref{thm:n1} serves as such an example. If any other outcomes occur in Step 1, similar protocols can be constructed to perfectly distinguish the corresponding states.

		\section{The proof of Theorem 4}\label{B}
		
		Because each of subsets $\{|\phi_i\rangle\}$, $\{|\phi_{i+(d_1-2)}\rangle\}$, $\{|\phi_{j+(d_1-2)}\rangle\}$, $\{|\phi_{s+(d_1+d_2-3)}\rangle\}$, $\{|\phi_{2d_2-1}\rangle\}$, $\{|\phi_{i+2d_2-1}\rangle\}$, $\{|\phi_{i+3d_2-2}\rangle\}$, $\{|\phi_{j+3d_2-2}\rangle\}$, $\{|\phi_{s+3d_2+d_3-3}\rangle\}$, and $\{|\phi_{2(d_2+d_3)-2}\rangle\}$ in Eq. (\ref{GNI2}) is LOCC distinguishable, we can also locally distinguish these subsets. The protocol proceeds as follows.
		
		\textbf{Step 1.} Alice performs the measurements\[
		\begin{aligned}
			\mathcal{M}_1 \equiv \bigl\{
			M_{1,1}=|d_1'-1\rangle_A \langle d_1'-1|,\;
			M_{1,2}=I-M_{1,1}\bigr\},
		\end{aligned}
		\]the whole set can be locally reduced to two disjoint subsets. The corresponding results for \(M_{1,1}\) and \(M_{1,2}\) are \(\left \{ \left | \phi _{i}   \right \rangle  \right \} _{i=2d_2}^{  2\left (d_2+d_3 \right )-2 }\) and \(\left \{ \left | \phi _{i}   \right \rangle  \right \} _{i=1}^{2d_2-1}\), respectively. Similar to the case in Appendix \ref{A}, quantum states $|\phi_{2d_2-1}\rangle$ and $|\phi_{2(d_2+d_3)-2}\rangle$ will not be discussed in the subsequent protocols.
		
		\textbf{Step 2.} For the branch corresponding to \(M_{1,2}\), Alice and Bob share a maximally entangled state \(\vert \phi^{+}(3)\rangle_{ab_1}\). The initial state is
		\[
		\begin{aligned}
			\left|\phi\right\rangle_{ABC} \otimes \left|\phi^{+}(3)\right\rangle_{ab_1},
		\end{aligned}
		\] where $a$ and $b_1$ are the ancillary systems of Alice and Bob, respectively. Then Alice performs the measurement
		\[
		\begin{aligned}
			\mathcal{M}_2 \equiv \bigl\{
			&M_{2,1}=P\bigl[(|0\rangle,|1\rangle)_A;|0\rangle_a\bigr]
			+P\bigl[|2\rangle_A;|1\rangle_a\bigr]+P\bigl[(|3\rangle,|4\rangle,\dots,|d_1-1\rangle)_A;|2\rangle_a\bigr], \\
			\;
			&M_{2,2}=P\bigl[(|0\rangle,|1\rangle)_A;|1\rangle_a\bigr]
			+P\bigl[|2\rangle_A;|0\rangle_a\bigr]+P\bigl[(|3\rangle,|4\rangle,\dots,|d_1-1\rangle)_A;|2\rangle_a\bigr], \\
			\;
			&M_{2,3}=I-M_{2,1}-M_{2,2}\bigr\}.
		\end{aligned}
		\]
		Suppose that the outcome corresponding to \(M_{2,1}\) occurs. Then
		\[
		\begin{aligned}
			&|\phi_1\rangle \to |1\rangle_A |0-1\rangle_B |0\rangle_C|00\rangle_{ab_1}, \\
			&|\phi_2\rangle \to |2\rangle_A |0-2\rangle_B |0\rangle_C|11\rangle_{ab_1}, \\
			&|\phi_i\rangle \to  |i\rangle_A |0-i\rangle_B |0\rangle_C|22\rangle_{ab_1}, \quad  i =3,4,\dots, d_1-2,  \\
			&|\phi_{1+(d_1-2)}\rangle \to  |0-1\rangle_A |2\rangle_B |0\rangle_C|00\rangle_{ab_1}, \\
			&|\phi_{2+(d_1-2)}\rangle \to  |0\rangle_A |3\rangle_B |0\rangle_C|00\rangle_{ab_1}-|2\rangle_A |3\rangle_B |0\rangle_C|11\rangle_{ab_1}, \\
			&|\phi_{i+(d_1-2)}\rangle \to  |0\rangle_A |m\rangle_B |0\rangle_C|00\rangle_{ab_1}-|i\rangle_A |m\rangle_B |0\rangle_C|22\rangle_{ab_1}, \quad  i =3,4,\dots, d_1-3,  \quad m = i+1, \\
			&|\phi_{(d_1-2)+(d_1-2)}\rangle \to  |0\rangle_A |1\rangle_B |0\rangle_C|00\rangle_{ab_1}-|(d_1-2)\rangle_A |1\rangle_B |0\rangle_C|22\rangle_{ab_1}, \\
			&|\phi_{j+(d_1-2)}\rangle \to  |0-1\rangle_A |j\rangle_B |0\rangle_C|00\rangle_{ab_1}, \quad  j =d_1-1, \dots, d_2-1, \\
			&|\phi_{s+(d_1+d_2-3)}\rangle \to  |0+1\rangle_A |\eta\rangle_B |0\rangle_C|00\rangle_{ab_1}, \quad  s = 1 , \dots,  d_2-d_1+1,
		\end{aligned}
		\]	where $|\eta\rangle_B = |2\rangle_B + \sum_{t=1}^{d_2-d_1+1} \omega_{d_2-d_1+2}^{st} |t+d_1-2\rangle_B$.
		
		\textbf{Step 3.} Bob performs the measurement
		\[
		\begin{aligned}
			\mathcal{M}_3\equiv \bigl\{
			M_{3,1}=P[(|2\rangle,|d_1\rangle,\dots,|d_2-1\rangle)_B;|0\rangle_{b_1}],
			\;
			M_{3,2}=I-M_{3,1}
			\bigr\}.
		\end{aligned}
		\]
		If outcome \(M_{3,1}\) occurs, then the subset is one of  \(\{|\phi_{1+(d_1-1)}\rangle\}\), \(\{|\phi_{j+(d_1-1)}\rangle\}\), and \(\{ |\phi_{s+(d_1+d_2-2)}\rangle\}\),  which are locally distinguishable. Otherwise, the state belongs to one of the remaining subsets.
		
		\textbf{Step 4.} Alice performs the measurement
		\[
		\begin{aligned}
			\mathcal{M}_4\equiv \bigl\{
			M_{4,1}=P[|1\rangle_A;|0\rangle_{a}],	\; M_{4,2}=I-M_{4,1}
			\bigr\}.
		\end{aligned}
		\]
		If outcome \(M_{4,1}\) occurs, then the state is \(|\phi_{1}\rangle\). Otherwise, we proceed to the next step.
		
		\textbf{Step 5.} Bob performs the measurement
		\[
		\begin{aligned}
			\mathcal{M}_5\equiv \bigl\{
			&M_{5,1}=P[|1\rangle_B;(|0\rangle,|2\rangle)_{b_1}],\\	
			&M_{5,2}=P[(|0\rangle,|2\rangle)_B;|1\rangle_{b_1}], \\
			\;
			&M_{5,3}=P[|3\rangle_B;(|0\rangle,|1\rangle)_{b_1}], \\
			\;
			&M_{5,4}=I-M_{4,1}-M_{4,2}-M_{4,3}
			\bigr\}.
		\end{aligned}
		\]
		If outcome \(M_{5,1}\) occurs, then the state is \(|\phi_{(d_1-2)+(d_1-2)}\rangle \); if outcome \(M_{5,2}\) occurs, then the state is \(|\phi_{2}\rangle\); if outcome \(M_{5,3}\) occurs, then the state is \(|\phi_{2+(d_1-2)}\rangle\). Otherwise, the state belongs to the set of remaining states \(\{|\phi_{i}\rangle\}_{i =3}^{d_1-2}\bigcup\{|\phi_{i+(d_1-2)}\rangle\}_{i =3}^{d_1-3}\).
		
		\textbf{Step 6.} Alice performs the measurement
		\[
		\begin{aligned}
			\mathcal{M}_6\equiv \bigl\{
			M_{6,1}=P[|d_1-2\rangle_A;|2\rangle_{a}],	\; M_{6,2}=I-M_{6,1}
			\bigr\}.
		\end{aligned}
		\]
		If outcome \(M_{6,1}\) occurs, then the state is \(|\phi_{d_1-2}\rangle \). Otherwise, Bob performs the measurement
		\[
		\begin{aligned}
			\mathcal{M}_7\equiv \bigl\{
			M_{7,1}=P[|d_1-2\rangle_B;(|0\rangle,|2\rangle)_{b_1}],	\; M_{7,2}=I-M_{7,1}
			\bigr\}.
		\end{aligned}
		\]
		If outcome \(M_{7,1}\) occurs, then the state is \(|\phi_{(d_1-3)+(d_1-2)}\rangle \). Operate alternately in this way, the set \(\{|\phi_{i}\rangle\}_{i =3}^{d_1-2}\bigcup\{|\phi_{i+(d_1-2)}\rangle\}_{i =3}^{d_1-3}\) can be perfectly distinguished.
		
		\textbf{Step 2$'$.} For the branch corresponding to  \(M_{1,1}\), Alice and Bob share a maximally entangled state \(\vert \phi^{+}(3)\rangle_{b_2c}\). The initial state is
		\[
		\begin{aligned}
			\left|\phi\right\rangle_{ABC} \otimes \left|\phi^{+}(3)\right\rangle_{b_2c},
		\end{aligned}
		\] where $b_2$ and $c$ are the ancillary systems of Bob and Charlie, respectively. Then Bob performs the measurement
		\[
		\begin{aligned}
			\mathcal{M}_{8} \equiv \bigl\{
			&M_{8,1}=P\bigl[(|0\rangle,|1\rangle)_B;|0\rangle_{b_2}\bigr]
			+P\bigl[|2\rangle_B;|1\rangle_{b_2}\bigr]+P\bigl[(|3\rangle,|4\rangle,\dots,|d_1-1\rangle)_B;|2\rangle_{b_2}\bigr], \\
			\;
			&M_{8,2}=P\bigl[(|0\rangle,|1\rangle)_B;|1\rangle_{b_2}\bigr]
			+P\bigl[|2\rangle_B;|0\rangle_{b_2}\bigr]+P\bigl[(|3\rangle,|4\rangle,\dots,|d_1-1\rangle)_B;|2\rangle_{b_2}\bigr], \\
			\;
			&M_{8,3}=I-M_{8,1}-M_{8,2}\bigr\}.
		\end{aligned}
		\]
		Suppose that the outcome corresponding to \(M_{8,1}\) occurs. Then
		\[
		\begin{aligned}
			&|\phi_{1+2d_2-1}\rangle \to  |d_1-1\rangle_A |1\rangle_B |0-1\rangle_C|00\rangle_{b_2c}, \\
			&|\phi_{2+2d_2-1}\rangle \to  |d_1-1\rangle_A |2\rangle_B |0-2\rangle_C|11\rangle_{b_2c}, \\
			&|\phi_{i+2d_2-1}\rangle \to  |d_1-1\rangle_A |i\rangle_B |0-i\rangle_C|22\rangle_{b_2c}, \quad  i = 3, 4, \dots, d_2-1,  \\	
			&|\phi_{1+3d_2-2}\rangle \to  |d_1-1\rangle_A |0-1\rangle_B |2\rangle_C|00\rangle_{b_2c}, \\
			&|\phi_{2+3d_2-2}\rangle \to  |d_1-1\rangle_A |0\rangle_B |3\rangle_C|00\rangle_{b_2c}-|d_1-1\rangle_A |2\rangle_B |3\rangle_C|11\rangle_{b_2c}, \\
			&|\phi_{i+3d_2-2}\rangle \to  |d_1-1\rangle_A |0\rangle_B |m\rangle_C|00\rangle_{b_2c}-|d_1-1\rangle_A |i\rangle_B |m\rangle_C|22\rangle_{b_2c}, \quad  i = 3, 4, \dots, d_2-2, \quad m = i+1, \\
			&|\phi_{(d_2-1)+3d_2-2}\rangle \to  |d_1-1\rangle_A |0\rangle_B |1\rangle_C|00\rangle_{b_2c}-|d_1-1\rangle_A |d_2-1\rangle_B |1\rangle_C|22\rangle_{b_2c}, \\
			&|\phi_{j+3d_2-2}\rangle \to  |d_1-1\rangle_A |0-1\rangle_B |j\rangle_C|00\rangle_{b_2c},\quad j=d_2, \dots, d_3-1, \\
			&|\phi_{s+3d_2+d_3-3}\rangle \to  |d_1-1\rangle_A |0+1\rangle_B |\eta\rangle_C|00\rangle_{b_2c}, \quad  s =1, \dots, d_3-d_2,
		\end{aligned}
		\]where $|\eta\rangle_C = |2\rangle_C + \sum_{t=1}^{d_3-d_2} \omega_{d_3-d_2+1}^{st} |t+d_2-1\rangle_C$.
		
		\textbf{Step 3$'$.} Charlie then performs the measurement
		\[
		\begin{aligned}
			\mathcal{M}_{9}\equiv \bigl\{
			M_{9,1}=P[(|2\rangle,|d_2\rangle,\dots,|d_3-1\rangle)_C;|0\rangle_{c}],
			M_{9,2}=I-M_{9,1}
			\bigr\}.
		\end{aligned}
		\]
		If outcome \(M_{9,1}\) occurs, then the subsets are \(\{|\phi_{1+3d_2-2}\rangle\}\), \(\{|\phi_{j+3d_2-2)}\rangle\}\), and \(\{|\phi_{s+3d_2+d_3-3)}\rangle\}\), which are locally distinguishable. Otherwise, the state belongs to one of the remaining subsets.
		
		\textbf{Step 4$'$.}  Bob performs
		\[
		\mathcal{M}_{10} \equiv \bigl\{
		M_{10,1}=P[|1\rangle_{B};|0\rangle_{b_2}],
		\;
		M_{10,2}=I-M_{10,1}
		\bigr\}.
		\]
		The corresponding result for \(M_{10,1}\) is \(|\phi_{1+2d_2-1} \rangle\). Otherwise, the state is one of the remaining states.
		
		\textbf{Step 5$'$.} Charlie performs the measurement
		\[
		\begin{aligned}
			\mathcal{M}_{11}\equiv \bigl\{
			&M_{11,1}=P[|1\rangle_C;(|0\rangle,|2\rangle)_{c}], \\
			\;
			&M_{11,2}=P[(|0\rangle,|2\rangle)_C;|1\rangle_{c}], \\
			\;
			&M_{11,3}=P[|3\rangle_C;(|0\rangle,|1\rangle)_{c}], \\
			\;
			&M_{11,4}=I-M_{11,1}-M_{11,1}-M_{11,1}
			\bigr\}.
		\end{aligned}
		\]
		If outcome \(M_{11,1}\) occurs, then the state is \(|\phi_{(d_2-1)+3d_2-2}\rangle\); if outcome \(M_{11,2}\) occurs, then the state is \(|\phi_{2+2d_2-1}\rangle\); if outcome \(M_{11,3}\) occurs, then the state is \(|\phi_{2+3d_2-2}\rangle\). Otherwise, the state belongs to the set of the remaining states \(\{|\phi_{i+2d_2-1}\rangle_{i =3}^{d_2-1}\bigcup\{|\phi_{i+3d_2-2}\rangle\}_{i =3}^{d_2-2}\). Similar to the case of Step 5, these states can be perfectly distinguished.	
		
		The discussions concerning dimensions $d_1$, $d_2$ and $d_3$ coincide with the proof of Theorem \ref{thm:n2}. If any other outcomes in Step 2 or any other outcomes in Step 2$'$ occurs, we can find the distinction protocol similarly.
		
		\section{The proof of Theorem 5}\label{C}
		
		The $\sum_{i=2}^{n}(2d_i-1)$ states in Eq. (\ref{GNI3}) are locally indistinguishable in any bipartition and have genuine nonlocality. But it is locally reducible. When $A_1$ performs the measurements
		\[
		\begin{aligned}
			\mathcal{M}_1 \equiv \bigl\{
			M_{1,1}=|d_1-1\rangle_{A_1}\langle d_1-1|,\;
			M_{1,2}=I-M_{1,1}\bigr\},
		\end{aligned}
		\]
		the whole set can be locally reduced to two disjoint subsets
		$\{G_1, \cdots, G_{n-2}\}$ and $\{G_{n-1}\}$.
		
		Next, $A_n$ performs the measurements \[
		\begin{aligned}
			\mathcal{M}_2 \equiv \bigl\{
			M_{2,1}=|1\rangle_{A_n}\langle 1|,\;
			M_{2,2}=I-M_{2,1}\bigr\},
		\end{aligned}
		\]
		the subset $\{G_1, \cdots, G_{n-2}\}$ can be locally reduced to two disjoint subsets
		$\{G_1\}$ and $\{G_{2}, \dots, G_{n-2}\}$. Similarly, when $A_{n-1}$, $A_{n-1}$, $\dots$, $A_{n-1}$ perform the measurements $\mathcal{M}_2 $, respectively, the set $\{G_{2}, \dots, G_{n-2}\}$ can be distinguished one by one.
		
		For the $\{G_1\}$, let $A_1$ and $A_2$ share a maximally entangled state \(\vert \phi^{+}(2)\rangle_{a_1a_2}\). The initial state is
		\[
		\begin{aligned}
			\left|\phi\right\rangle_{A_1A_2\cdots A_n} \otimes \left|\phi^{+}(2)\right\rangle_{a_1a_2},
		\end{aligned}
		\]
		where $a_1$ and $ a_2$ are the ancillary systems of $A_1$ and $ A_2$, respectively. Then, by performing the same discrimination protocol as in Appendix~\ref{A}, these states can be perfectly distinguished.
		
		For the $\{G_2\}$, let $A_2$ and $A_3$ share a maximally entangled state \(\vert \phi^{+}(2)\rangle_{a_2a_3}\). For the $\{G_3\}$, let $A_2$ and $A_4$ share a maximally entangled state \(\vert \phi^{+}(2)\rangle_{a_2a_4}\). By analogy, for $G_{4}$, $\dots$, $G_{n-1}$, perfect discrimination can be achieved by introducing a maximally entangled state respectively and referring to the entanglement-assisted discrimination protocol in Appendix~\ref{A}.

		\section{The proof of Theorem 7}\label{D}
		
		By using the entanglement resource \(|G(3) \rangle_{abc}=|000\rangle_{abc}+|111\rangle_{abc}+|222\rangle_{abc}\), the initial state is
		\[
		\begin{aligned}
			\left|\phi\right\rangle_{ABC} \otimes \left|G(3)\right\rangle_{abc},
		\end{aligned}
		\]where $a$, $b$, and $c$ are the ancillary systems of Alice, Bob, and Charlie, respectively. Because each subset in Eq. (\ref{GNII2}) are LOCC distinguishable, one only needs to locally distinguish these subsets.		
		
		According to the characteristic of set in (\ref{GNII2}), similar to the distinguishing processes of the set in Theorem \ref{N2}, the  state $|\phi_{2d_2-1}\rangle$ will be omited further discussion in the subsequent protocols.
		
		\textbf{Step 1.} Alice performs the measurement
		\[
		\begin{aligned}
			\mathcal{M}_1 \equiv \bigl\{
			&M_{1,1}=P\bigl[(|0\rangle,|1\rangle)_A;|0\rangle_a\bigr]
			+P\bigl[|2\rangle_A;|1\rangle_a\bigr]+P\bigl[(|3\rangle,|4\rangle,\dots,|d_1-1\rangle)_A;|2\rangle_a\bigr], \\
			\;
			&M_{1,2}=P\bigl[(|0\rangle,|1\rangle)_A;|1\rangle_a\bigr]
			+P\bigl[|2\rangle_A;|0\rangle_a\bigr]+P\bigl[(|3\rangle,|4\rangle,\dots,|d_1-1\rangle)_A;|2\rangle_a\bigr], \\
			\;
			&M_{1,3}=I-M_{1,1}-M_{1,2}\bigr\}.
		\end{aligned}
		\]
		Suppose that the outcome corresponding to \(M_{1,1}\) occurs. Then
		\[
		\begin{aligned}
			& |\phi_1\rangle \to  |1\rangle_A |0-1\rangle_B |1\rangle_C|000\rangle_{abc}, \\
			& |\phi_2\rangle \to  |2\rangle_A |0-2\rangle_B |1\rangle_C|111\rangle_{abc}, \\
			& |\phi_i\rangle \to  |i\rangle_A |0-i\rangle_B |1\rangle_C|222\rangle_{abc}, \quad i=3,\dots,d_1-1, \\
			& |\phi_{1+d_1-1}\rangle \to  |0-1\rangle_A |2\rangle_B |1\rangle_C|000\rangle_{abc}, \\
			& |\phi_{2+d_1-1}\rangle \to  |0\rangle_A |3\rangle_B |1\rangle_C|000\rangle_{abc}-|2\rangle_A |3\rangle_B |1\rangle_C|111\rangle_{abc}, \\
			& |\phi_{i+d_1-1}\rangle \to  |0\rangle_A |j\rangle_B |1\rangle_C|000\rangle_{abc}-|i\rangle_A |j\rangle_B |1\rangle_C|222\rangle_{abc}, \quad i=3,\dots, d_1-2, \quad j = i+1, \\
			& |\phi_{(d_1-1)+(d_1-1)}\rangle \to  |0\rangle_A |1\rangle_B |1\rangle_C|000\rangle_{abc}-|d_1-1\rangle_A |1\rangle_B |1\rangle_C|222\rangle_{abc}, \\
			& |\phi_{m+d_1-1}\rangle \to  |0-1\rangle_A |m\rangle_B |1\rangle_C|000\rangle_{abc}, \quad m=d_1,\dots,d_2-1, \\
			& |\phi_{s_1+d_1+d_2-2}\rangle \to  |0+1\rangle_A \left( |2\rangle_B + \sum_{t_1=1}^{d_2-d_1} \omega_{d_2-d_1+1}^{s_1 t_1} |t_1+d_1-1\rangle_B \right) |1\rangle_C|000\rangle_{abc}, \quad s_1=1,\dots,d_2-d_1, \\
			& |\phi_{1+2d_2-1}\rangle \to  |1\rangle_A |0+1\rangle_B |0-1\rangle_C|000\rangle_{abc}, \\
			& |\phi_{2+2d_2-1}\rangle \to  |2\rangle_A |0+1\rangle_B |0-2\rangle_C|111\rangle_{abc}, \\
			& |\phi_{i+2d_2-1}\rangle \to  |i\rangle_A |0+1\rangle_B |0-i\rangle_C|222\rangle_{abc}, \quad  i=3,\dots,d_1-1, \\
			& |\phi_{1+d_1+2d_2-2}\rangle \to  |0-1\rangle_A |0+1\rangle_B |2\rangle_C|000\rangle_{abc}, \\
			& |\phi_{2+d_1+2d_2-2}\rangle \to  |0\rangle_A |0+1\rangle_B |3\rangle_C|000\rangle_{abc}-|2\rangle_A |0+1\rangle_B |3\rangle_C|111\rangle_{abc}, \\
			& |\phi_{i+d_1+2d_2-2}\rangle \to  |0\rangle_A |0+1\rangle_B |j\rangle_C|000\rangle_{abc}-|i\rangle_A |0+1\rangle_B |j\rangle_C|222\rangle_{abc}, \quad  i=3,\dots,d_1-2, \ j=i+1, \\
			& |\phi_{n+d_1+2d_2-3}\rangle \to  |0-1\rangle_A |0+1\rangle_B |n\rangle_C|000\rangle_{abc}, \quad  n=d_1,\dots,d_3-1, \\
			& |\phi_{s_2+d_1+2d_2+d_3-4}\rangle \to  |0+1\rangle_A |0+1\rangle_B \left( |2\rangle_C + \sum_{t_2=1}^{d_3-d_1} \omega_{d_3-d_1+1}^{s_2 t_2} |t_2+d_1-1\rangle_C \right)|000\rangle_{abc}, \quad  s_2 =1,\dots,d_3-d_1.
		\end{aligned}
		\]
		
		\textbf{Step 2.} Bob performs the measurement
		\[
		\begin{aligned}
			\mathcal{M}_2\equiv \bigl\{
			M_{2,1}=P[(|2\rangle,|d_1\rangle,\dots,|d_2-1\rangle)_B;|0\rangle_{b}],
			\;
			M_{2,2}=I-M_{2,1}
			\bigr\}.
		\end{aligned}
		\]
		If outcome \(M_{2,1}\) occurs, then the subsets are \(\{|\phi_{1+(d_1-1)}\rangle\}\), \(\{|\phi_{m+(d_1-1)}\rangle\}\), and \(\{|\phi_{s_1+d_1+d_2-2}\rangle\}\), which are locally distinguishable. Next, Charlie performs the measurement
		\[
		\begin{aligned}
			\mathcal{M}_3\equiv \bigl\{
			M_{3,1}=P[(|2\rangle,|d_1\rangle,\dots,|d_3-1\rangle)_C;|0\rangle_{c}],
			\;
			M_{3,2}=I-M_{3,1}
			\bigr\}.
		\end{aligned}
		\]
		If outcome \(M_{3,1}\) occurs, then the subsets are \(\{|\phi_{1+d_1+2d_2-2}\rangle\}\), \(\{|\phi_{n+d_1+2d_2-3)}\rangle\}\), and \(\{|\phi_{s_2+d_1+2d_2+d_3-4}\rangle\}\), which are locally distinguishable. Otherwise, the state belongs to one of the remaining possibilities.
		
		\textbf{Step 3.}  Alice then performs the measurement
		\[
		\begin{aligned}
			\mathcal{M}_4\equiv \bigl\{
			M_{4,1}=P[|1\rangle_A;|0\rangle_a],
			\;
			M_{4,2}=I-M_{4,1}
			\bigr\}.
		\end{aligned}
		\]
		If outcome \(M_{4,1}\) occurs, the states are \(|\phi_{1} \rangle\) and $\left |\phi_{1+2d_2-1}  \right  \rangle$, which can be perfectly distinguished by measuring subsystem $B$. Otherwise, we proceed to the next step.
		
		\textbf{Step 4.} Bob performs the measurement
		\[
		\begin{aligned}
			\mathcal{M}_5\equiv \bigl\{
			M_{5,1}=P[|3\rangle_B;(|0\rangle,|1\rangle)_{b}],
			\;
			M_{5,2}=I-M_{5,1}
			\bigr\}.
		\end{aligned}
		\]
		If outcome \(M_{5,1}\) occurs, then the state is \(|\phi_{2+d_1-1}\rangle\). Next, Charlie performs the measurement
		\[
		\begin{aligned}
			\mathcal{M}_6\equiv \bigl\{
			M_{6,1}=P[|3\rangle_C;(|0\rangle,|1\rangle)_{c}],
			\;
			M_{6,2}=I-M_{6,1}
			\bigr\}.
		\end{aligned}
		\]
		If outcome \(M_{6,1}\) occurs, then the state is \(|\phi_{2+d_1+2d_2-2}\rangle\). Otherwise, the state belongs to one of the remaining states.
		
		\textbf{Step 5.} Alice then performs the measurement
		\[
		\begin{aligned}
			\mathcal{M}_7\equiv \bigl\{
			M_{7,1}=P[|2\rangle_A;|1\rangle_a],
			\;
			M_{7,2}=I-M_{7,1}
			\bigr\}.
		\end{aligned}
		\]
		If outcome \(M_{7,1}\) occurs, the states are \(|\phi_{2} \rangle\) and \(|\phi_{2+2d_2-1} \rangle\), which can be perfectly distinguished by measuring subsystem $C$. Otherwise, we proceed to the next step.
		
		\textbf{Step 6.} Charlie performs the measurement
		\[
		\begin{aligned}
			\mathcal{M}_8\equiv \bigl\{
			M_{8,1}=P[|1\rangle_C;(|0\rangle,|2\rangle)_c],
			\;
			M_{8,2}=I-M_{8,1}
			\bigr\}.
		\end{aligned}
		\]
		If outcome \(M_{8,1}\) occurs, then the subsets are \(\{|\phi_{i} \rangle,|\phi_{i+d_1-1} \rangle\}_{i=3}^{d_1-1}\). If outcome \(M_{8,2}\) occurs, the subset is \(\{|\phi_{i+2d_2-1} \rangle\}_{i=3}^{d_1-1}\bigcup\{|\phi_{i+d_1+2d_2-2} \rangle\}_{i=3}^{d_1-2}\). Bob then performs the measurement
		\[
		\begin{aligned}
			\mathcal{M}_9\equiv \bigl\{
			M_{9,1}=P[|1\rangle_A;(|0\rangle,|2\rangle)_a],
			\;
			M_{9,2}=I-M_{9,1}
			\bigr\}.
		\end{aligned}
		\]
		If outcome \(M_{9,1}\) occurs, the state is \(|\phi_{(d_1-1)+(d_1-1)} \rangle\). For outcome \(M_{9,2}\), the subset is \(\{|\phi_{i} \rangle\}_{i=3}^{d_1-1}\bigcup\{|\phi_{i+d_1-1} \rangle\}_{i=3}^{d_1-2}\). It is easy to know that these subsets \(\{|\phi_{i+2d_2-1} \rangle\}_{i=3}^{d_1-1}\bigcup\{|\phi_{i+d_1+2d_2-2} \rangle\}_{i=3}^{d_1-2}\) and \(\{|\phi_{i} \rangle\}_{i=3}^{d_1-1}\bigcup\{|\phi_{i+d_1-1} \rangle\}_{i=3}^{d_1-2}\) are locally distinguishable based on the proof of Theorem \ref{thm:n2}.
		
		If any other outcomes occur in Step 1, similar protocols can be constructed to perfectly distinguish the corresponding states by LOCC.
		
		\section{The proof of Theorem 8}\label{E}
		
		Using the entanglement resource \(|\phi^{+}(2)\rangle_{a_{n-1}a_n}\), the initial state is
		\[
		\begin{aligned}
			\left|\phi\right\rangle_{A_1\cdots A_{n-1}A_n} \otimes \left|\phi^{+}(2)\right\rangle_{a_{n-1}a_n}.
		\end{aligned}
		\]The specific process is as follows.
		
		\textbf{Step 1.} $A_n$ performs the measurement
		\[
		\begin{aligned}
			\mathcal{M}_1 \equiv \bigl\{
			M_{1,1}=P\bigl[(|0\rangle,|2\rangle,|3\rangle,|4\rangle,\dots,|d_1-1\rangle)_{A_{n}};|0\rangle_{a_{n}}\bigr]+P\bigl[|1\rangle_{A_{n}};|1\rangle_{a_{n}}\bigr], \;M_{1,2}=I-M_{1,1}\bigr\}.
		\end{aligned}
		\]
		Suppose that the outcome corresponding to \(M_{1,1}\) occurs. Then
		\[
		\begin{aligned}
			&G_1 \to \{|x_{j_1}\rangle_{A_1}|y_{j_1}\rangle_{A_2}|0\rangle_{A_3}|0\rangle_{A_4}\cdots |0\rangle_{A_{n-3}}|0\rangle_{A_{n-2}}|0\rangle_{A_{n-1}}|1\rangle_{A_n}|11\rangle_{a_{n-1}a_n}\bigr\},\\
			&G_2 \to \{|1\rangle_{A_1}|x_{j_2}\rangle_{A_2}|z_{j_2}\rangle_{A_3}|0\rangle_{A_4}\cdots |0\rangle_{A_{n-3}}|0\rangle_{A_{n-2}}|0\rangle_{A_{n-1}}|0\rangle_{A_n}|00\rangle_{a_{n-1}a_n}\}, \\
			&G_3 \to \{|0\rangle_{A_1}|x_{j_3}\rangle_{A_2}|1\rangle_{A_3}|z_{j_3}\rangle_{A_4}\cdots |0\rangle_{A_{n-3}}|0\rangle_{A_{n-2}}|0\rangle_{A_{n-1}}|0\rangle_{A_n}|00\rangle_{a_{n-1}a_n}\}, \\
			&\vdots \\
			&G_{n-2} \to\{ |0\rangle_{A_1}|x_{j_{n-2}}\rangle_{A_2}|0\rangle_{A_3}|0\rangle_{A_4}\cdots |0\rangle_{A_{n-3}}|1\rangle_{A_{n-2}}|z_{j_{n-2}}\rangle_{A_{n-1}}|0\rangle_{A_n}|00\rangle_{a_{n-1}a_n}\}, \\
			&G_{n-1} \to\{|0\rangle_{A_1}|x_{j_{n-1}}\rangle_{A_2}|0\rangle_{A_3}|0\rangle_{A_4}\cdots |0\rangle_{A_{n-3}}|0\rangle_{A_{n-2}}|1\rangle_{A_{n-1}}|z_{j_{n-1}}'\rangle_{A_na_{n-1}a_n}\}.
		\end{aligned}
		\]Here  \(\{|z_{j_{n-1}}'\rangle_{A_na_{n-1}a_n}\}\) stands for the post-measurement state on corresponding subsystems of \(G_{n-1}\otimes \left|\phi^{+}(2)\right\rangle_{a_{n-1}a_n}\) after we add the ancillary particles.
		
		\textbf{Step 2.} $A_{n-1}$ performs the measurement
		\[
		\begin{aligned}
			\mathcal{M}_2\equiv \bigl\{
			M_{2,1}= P[|0\rangle_{A_{n-1}};|1\rangle_{a_{n-1}}],
			\;
			M_{2,2}=I-M_{2,1}
			\bigr\}.
		\end{aligned}
		\]
		If outcome \(M_{2,1}\) occurs, then the subset is \(G_1\). If outcome \(M_{2,2}\) occurs, we continue.
		
		\textbf{Step 3.}  $A_1$ performs the measurement
		\[
		\begin{aligned}
			\mathcal{M}_3\equiv \bigl\{
			M_{3,1}=|1\rangle_{A_1}\langle 1|,
			\;
			M_{3,2}=I-M_{3,1}
			\bigr\}.
		\end{aligned}
		\]
		If outcome \(M_{3,1}\) occurs, then the subsets are \(G_2\). Then \(A_3,A_4,\dots,A_{n-2}\) perform the measurement  $\mathcal{M}_3 $, respectively, the whole set $\{G_{3},\dots,G_{n-1}\}$ can be distinguished one by one.
		
		The subset \(G_k\) ($k=1,\ldots,n-1$) can be perfectly distinguished by applying the same discrimination protocol as described in Appendix~\ref{A}.
		
		If other outcome occurs in Step 1, similar protocol can be constructed to perfectly distinguish the corresponding subsets.
	\end{appendix}

\end{document}